\documentclass[prl,reprint,showpacs,superscriptaddress]{revtex4-1}
\usepackage[T1]{fontenc}
\usepackage{amsmath, amsthm, amssymb, mathtools, dsfont}
\usepackage{times}
\usepackage{graphicx}
\usepackage{dcolumn}
\usepackage{bm,bbm}
\usepackage[normalem]{ulem}
\usepackage{color}
\usepackage[usenames,dvipsnames]{xcolor}
\usepackage{esint}
\usepackage{paralist}


\usepackage{geometry}
\geometry{tmargin=2cm,bmargin=2cm,lmargin=1.5cm,rmargin=1.5cm}

\definecolor{myurlcolor}{rgb}{0,0,0.7}
\definecolor{myrefcolor}{rgb}{0.8,0,0}
\usepackage[unicode=true,pdfusetitle, bookmarks=true,bookmarksnumbered=false,bookmarksopen=false, breaklinks=false,pdfborder={0 0 0},backref=false,colorlinks=true, linkcolor=myrefcolor,citecolor=myurlcolor,urlcolor=myurlcolor]{hyperref}

\DeclareGraphicsExtensions{.png,.pdf,.eps}
\graphicspath{ {./figures/} }
\usepackage{epstopdf}


\setlength{\belowcaptionskip}{-10pt}

\makeatletter

 \theoremstyle{plain}
 \newtheorem{fact}{Fact}
 \theoremstyle{plain}
 \newtheorem{lem}{Lemma}
 \theoremstyle{plain}
 \newtheorem{thm}{Theorem}
 \theoremstyle{plain}
 \newtheorem{exa}{Example}
 \theoremstyle{plain}
 
 \theoremstyle{plain}
 
 \theoremstyle{remark}
 \newtheorem*{rem*}{Remark}
 \theoremstyle{plain}
  \newtheorem{rem}{Remark}


\newcommand{\ot}{\otimes}

\renewcommand{\exp}{\mathrm{exp}}
\DeclareMathOperator{\tr}{tr}

\renewcommand{\H}{\mathcal{H}}

\newcommand{\defeq}{\coloneqq}

\renewcommand{\C}{\mathbb{C}} 
\newcommand{\R}{\mathbb{R}} 
\newcommand{\M}{\mathbf{M}} 
\newcommand{\N}{\mathbf{N}} 
\newcommand{\T}{\mathbf{T}} 
\newcommand{\W}{\mathcal{W}} 
\newcommand{\Q}{\mathcal{Q}} 

\newcommand{\X}{\mathbf{X}} 

\renewcommand{\P}{\mathcal{P}} 
\newcommand{\PPP}{\mathbf{P}} 

\newcommand{\PP}{\mathbb{P}} 


\newcommand{\SP}{\mathrm{S}\mathbb{P}} 

\newcommand{\Herm}{\mathrm{Herm}} 
\newcommand{\I}{\mathbb{I}} 



\global\long\global\long\global\long\def\bra#1{\mbox{\ensuremath{\langle#1|}}}
\global\long\global\long\global\long\def\ket#1{\mbox{\ensuremath{|#1\rangle}}}
\global\long\global\long\global\long\def\bk#1#2{\mbox{\ensuremath{\ensuremath{\langle#1|#2\rangle}}}}
\global\long\global\long\global\long\def\kb#1#2{\mbox{\ensuremath{\ensuremath{\ensuremath{|#1\rangle\!\langle#2|}}}}}

\global\long\global\long\global\long\def\SET#1#2{\mbox{\ensuremath{\ensuremath{\left\lbrace\left. #1\ \right|\ #2 \right\rbrace }}}}

\makeatother

\begin{document}

\author{Micha{\l}  Oszmaniec}
\affiliation{ICFO-Institut de Ciencies Fotoniques, The Barcelona Institute of Science and Technology, 08860 Castelldefels (Barcelona), Spain}
\email{michal.oszmaniec@icfo.es} 

\author{Leonardo Guerini}
\affiliation{ICFO-Institut de Ciencies Fotoniques, The Barcelona Institute of Science and Technology, 08860 Castelldefels (Barcelona), Spain}
\affiliation{Departamento de Matem\'atica, Universidade Federal de Minas Gerais, Caixa
Postal 702, 31270-901, Belo Horizonte, MG, Brazil}

\author {Peter Wittek}
\affiliation{ICFO-Institut de Ciencies Fotoniques, The Barcelona Institute of Science and Technology, 08860 Castelldefels (Barcelona), Spain}
\affiliation {University of Bor{\aa}s, 50190 Bor{\aa}s, Sweden}

\author{Antonio Ac\'\i n}
\affiliation{ICFO-Institut de Ciencies Fotoniques, The Barcelona Institute of Science and Technology, 08860 Castelldefels (Barcelona), Spain}
\affiliation{ICREA-Instituci\'o Catalana de Recerca i Estudis Avan\c cats, Lluis Companys 23, 08010 Barcelona, Spain}

\title{Simulating positive-operator-valued measures with projective measurements}
\begin{abstract}
Standard projective measurements represent a subset of all possible-measurements in quantum physics, defined by positive-operator-valued measures. We study what quantum measurements are projective simulable, that is, can be simulated 
by using projective measurements and classical randomness. We first prove that every measurement on a given quantum system can be realised by classical processing of projective measurements on the system plus an ancilla of the same dimension. Then, given a general measurement in dimension two or three, we show that deciding whether it is projective-simulable can be solved by means of semi-definite programming. We also establish conditions for the simulation of measurements using projective ones valid for any dimension. As an application of our formalism, we improve the range of visibilities for which two-qubit Werner states do not violate any Bell inequality 
for all measurements. From an implementation point of view, our work provides bounds on the amount of noise a measurement tolerates before losing any advantage over projective ones.
\end{abstract}
\maketitle

According to the postulates of quantum mechanics, the statistics of a measurement of a quantum observable on a given quantum system is modelled by a set of orthogonal projectors acting on the Hilbert space of the system. For this reason, projective measurements (PMs) appear so
commonly in the context of quantum technologies, quantum foundations, and quantum information theory. It is known, however, that in quantum theory there exist more general measurements, corresponding to the so-called positive operator-valued measures (POVMs) defined by a set of positive operators summing up to the identity operator~\cite{Peres2006}. Being more general, POVMs outperform projective measurements for many tasks in quantum information theory, including quantum tomography~\cite{Renes2003}, unambiguous discrimination of quantum states~\cite{Bergou2010}, state estimation~\cite{Derka1998}, quantum cryptography~\cite{Bennet2002,Renes04Crypt,Nielsen2010}, information acquisition from a quantum source~\cite{Jozsa2003}, Bell 
inequalities~\cite{Gisin96,Vertesi2010} or device-independent quantum information protocols~\cite{Acin2016,Gomez2016}. 

Despite all these results, not much is known about the relation between general and projective measurements. In particular, given an arbitrary POVM, it is unknown whether it offers any advantage over projective ones or, on the contrary, can be replaced by projective measurements , which are easier to implement experimentally. The main objective of our work is to start the study of these questions. In particular, our goal is to characterise the set of PM-simulable measurements, i.e., those measurements that can be realised by performing projective measurements with the help of classical randomness (probabilistic mixing) and classical post-processing (see Figure~\ref{GenQuest}). We start by providing a formal definition of the set of PM-simulable measurements. We provide an alternative operational interpretation of this set by a generalisation of Naimark's theorem: to implement an arbitrary POVM on a Hilbert space $\H$ it suffices to have access to PM-simulable measurements on the extended system $\H\ot\H'$, where $\dim(\H')=\dim(\H)$.
We then move to the study of those POVMs that can be simulated by projective measurements performed solely on $\H$. 
We solve this problem completely for qubits and qutrits by giving a characterisation of the projective-simulable set
in terms of semi-definite programs (SDPs)~\cite{sdp} and providing explicit simulation strategies \footnote{All algorithms were implemented with PICOS -- Python Interface for Conic Optimization Software (available at \url{http://picos.zib.de/}) and we made the computational details available at \url{https://github.com/peterwittek/ipython-notebooks/blob/master/Simulating_POVMs.ipynb}.}. We also provide
a partial characterisation of PM-simulable measurements for arbitrary dimension. Finally, we illustrate the usefulness of our approach in the context of Bell inequalities. Based on ideas of the recent works~\cite{Dani2015,Hirsch2015}, we use our results to extend the range of visibilities for which qubit Werner states~\cite{Werner1989} have a local hidden-variable model for all measurements, improving the previous bound derived by Barrett ~\cite{Barrett2002}.

Apart from the theoretical interest, our results also have implications from an experimental perspective. In fact, while general measurements are more powerful than projective, their implementation is also more demanding. To implement a POVM on a Hilbert space $\mathcal{H}$, one usually has to control and probe not only the quantum system in question but also additional degrees of freedom \cite{Exp2009,FoundChile2011}. For instance, in an implementation with polarised photons, a projective measurements consists of a simple polarised beam-splitter, while a POVM requires coupling polarisation to other degrees of freedom \cite{Ota2012}, such as orbital angular momentum or spatial modes. It is then a usual situation than POVMs are noisier than projective measurements, which are almost noise-free. Within this context, our results provide bounds on the amount of noise a POVM tolerates before losing any advantage with respect to projective ones.

\begin{figure}[!h]
\centering
\includegraphics[width=0.6\columnwidth]{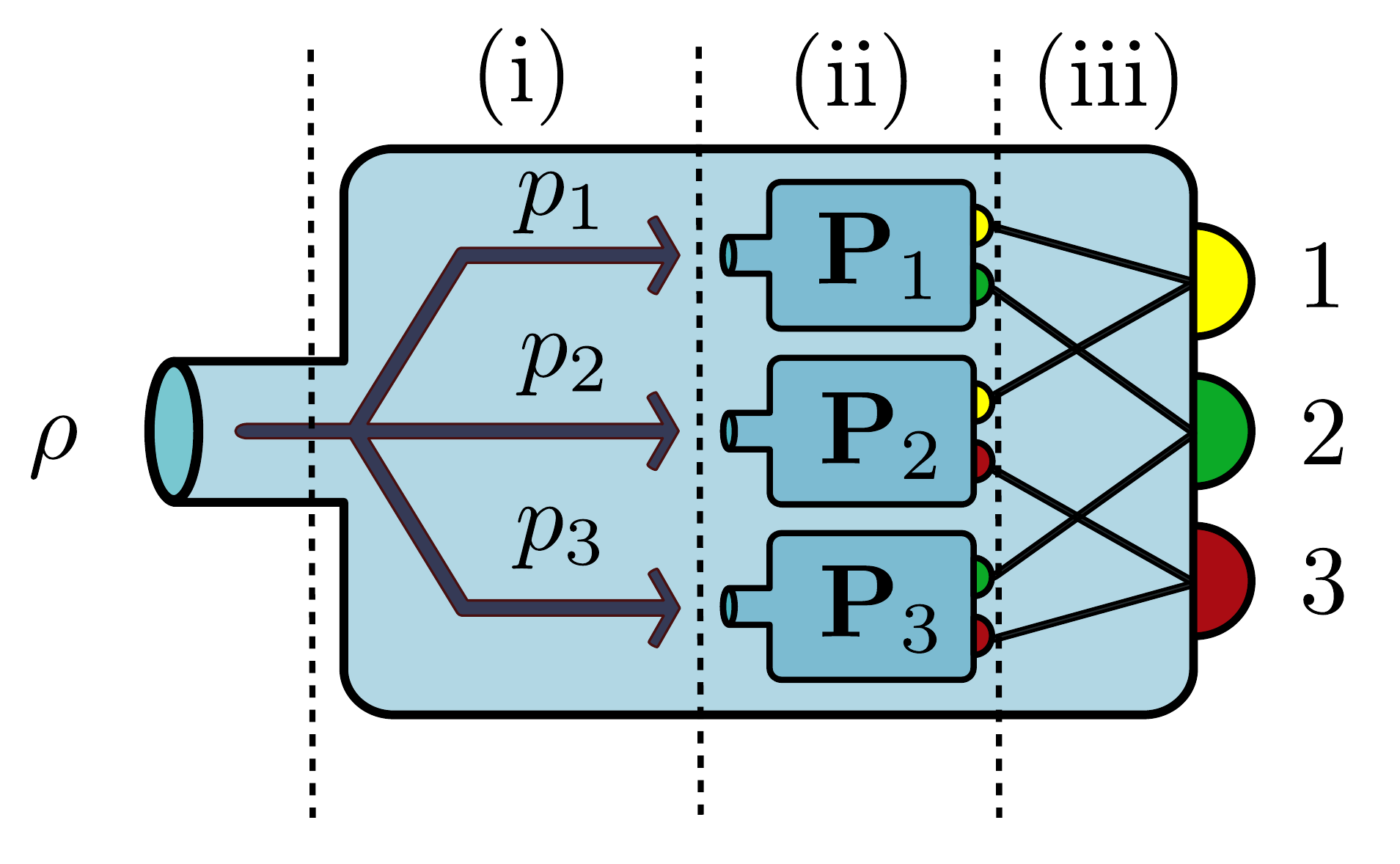}
\caption{We investigate which generalised measurements 
	 on a finite-dimensional Hilbert space $\H$ can be simulated if one has access only to (i) classical randomness, (ii) projective measurements on $\H$, and (iii) classical post-processing. The figure represents a three-outcome POVM which is simulated by combining three dichotomic projective measurements.}
\label{GenQuest}
\end{figure}
 
\paragraph*{Preliminaries---}

We start by introducing the basic concepts and notations used in what follows. 
By $\mathcal{H}$ we denote a $d$-dimensional Hilbert space ($\H\cong\C^{d}$) and by $\Herm(\H)$ the set of Hermitian operators on this space.
A POVM on $\mathcal{H}$ with $n$ outcomes is a vector $\M=\left(M_{1},\ldots,M_{n}\right)\in\Herm(\H)^{\times n}$ of non-negative operators satisfying $\sum_{i=1}^{n}M_{i}=\I$, where $\I$
is the identity operator on $\mathcal{H}$. 
The operators $M_{i}$ are called the effects of $\M$.
 We will denote the set of $n$-outcome POVMs on $\H$ by $\P\left(\H,n\right)$, or simply $\P\left(d,n\right)$ if only its dimension is relevant.
According to Born's rule, when a POVM $\M$ is measured on the quantum state $\rho$ the probability of obtaining the outcome $i$ is given by $p_i\left(\rho\right)=\mathrm{tr}\left(M_{i}\rho\right)$, $i=1,\ldots,n$. Given two POVMs $\M,\N\in\P\left(d,n\right)$, their convex combination $p\M+(1-p)\N$ is the POVM with $i$-th effect given by $\left[p\M+\left(1-p\right)\N\right]_i \defeq pM_{i}+(1-p)N_{i}$, and therefore $\P\left(d,n\right)$ is a convex set (see \cite{Sentis2013} for an efficient algorithm for decomposing an arbitrary POVM onto extremal POVMs). A projective (von Neumann) measurement is a POVM
whose effects are orthogonal 
projectors.
We denote the set of projective POVMs embedded in the space of $n$-output measurements $\P\left(d,n\right)$ by $\PP\left(d,n\right)$.
Notice that some of the outputs can have null effects and that effects are not required to be rank-one.
It is also useful to define $\P\left(d,n;m\right)$, the set of $m$-output measurements ($m\leq n$) considered as a subset of $\P(d,n)$.

Quantum measurements 
can be manipulated classically combining two different techniques~\cite{Buscemi2005,Haapasalo2012}: (i) randomisation (mixing), and (ii) post-processing \footnote{See also~\cite{Buscemi2005} for the systematic treatment of {\emph{quantum}} pre- and post-processing of POVMs.}.
Classical randomisation of the collection of POVMs $\left\lbrace \N_k \right\rbrace$ consists of choosing with probability $p_k$ the measurement $\N_k$ to be performed. This procedure yields a convex combination of POVMs $\left\lbrace \N_k \right\rbrace$,
$\N=\sum_k p_k  \N_k$. Classical post-processing of a POVM $\N$ is a strategy in which upon  obtaining an output $j$, one produces the final output $i$ with probability $q(i|j)$.
For a given post-processing strategy $q(i|j)$ this procedure gives a POVM $\Q \left(\N\right)$ with effects given by $\left[\Q\left(\N\right)\right]_i \defeq \sum_{j=1}^{n} q(i|j) N_j$. Notice that relabelling and coarse-graining are particular cases of post-processing.
We say that a POVM $\M\in\P(d,n)$ is PM-simulable if and only if it can be realised as classical randomisation followed by classical post-processing of some PMs $\textbf{P}_i \in \PP(d,n)$. We denote the class of PM-simulable $n$-outcome POVMs on $\H$ by $\SP(d,n)$.

We now introduce a quantitative measure of the non-projective character of POVMs. To do that, we estimate the amount of noise a measurement tolerates before it becomes PM-simulable. While several noise models are possible, here we take the simplest one defined by the depolarising map $\Phi_t(X)=tX+(1-t)\frac{\tr(X)}{d}\I$, 
which is also natural from an experimental point of view. When applied to a POVM $\M$, one gets ${
\left[\Phi_t\left(\M\right)\right]_i \defeq t M_i + (1-t)\frac{\tr(M_i)}{d} \I}$.
It is easy to see that $\Phi_1 (\M)=\M$ and that $\Phi_0(\M)$ is a trivial POVM $(\tr(M_1)\I/d,\ldots, \tr(M_n) \I/d)$ that can be realised by mixing the deterministic (projective) measurements $(\I,0,\ldots,0),\ldots, (0,\ldots,0,\I)$. Therefore, given $\M$, we are motivated to define the quantity
\begin{equation}\label{eq:visibil}
t(\M)\defeq \mathrm{max}\SET{t}{\Phi_t (\M) \in \SP(d,n)} \ ,
\end{equation}
that is, the maximal visibility ensuring that $\M$ is PM-simulable.
Due to the self-duality of the depolarising map $\Phi_t$, for every output $i$ and every state $\rho$ one has ${\mathrm{tr}\left(\rho \left[\Phi_t(M_i)\right] \right)=\mathrm{tr}\left([\Phi_t(\rho)] M_{i}\right)}$. In other words, the implementation of a POVM $\M$ on noisy states $\Phi_t(\rho)$ is statistically equivalent to measuring unperturbed states $\rho$ with the depolarised POVM $\Phi_t(\M)$. 

Finally, we denote by $t(d)$ the maximal visibility $t$ that ensures that $\Phi_t (\M)$ is projective-simulable for any $\M$ on $\C^d$,
\begin{equation}\label{eq:visibil}
t(d)\defeq \min\SET{t(\M)}{\M\in\P(d,d^2) }\ .
\end{equation}
In the above equation, the minimisation is over the set of $d^2$-output measurements on $\H$.
This is justified by the observation that the function $\M\mapsto t(\M)$ is concave and thus the minimum is obtained for some extremal POVM. Moreover, extremal POVMs on $\C^d$ have at most $d^2$ outcomes~\cite{DAriano2005,Chiribella2007}. Intuitively, the measurement $\M^*$ defined by the solution of the optimisation problem \eqref{eq:visibil} is the hardest to simulate with projective measurements. Figure~\ref{Geometry} provides a schematic representation of all the previous concepts.

\begin{figure}[!h]
\centering
\includegraphics[width=0.7\columnwidth]{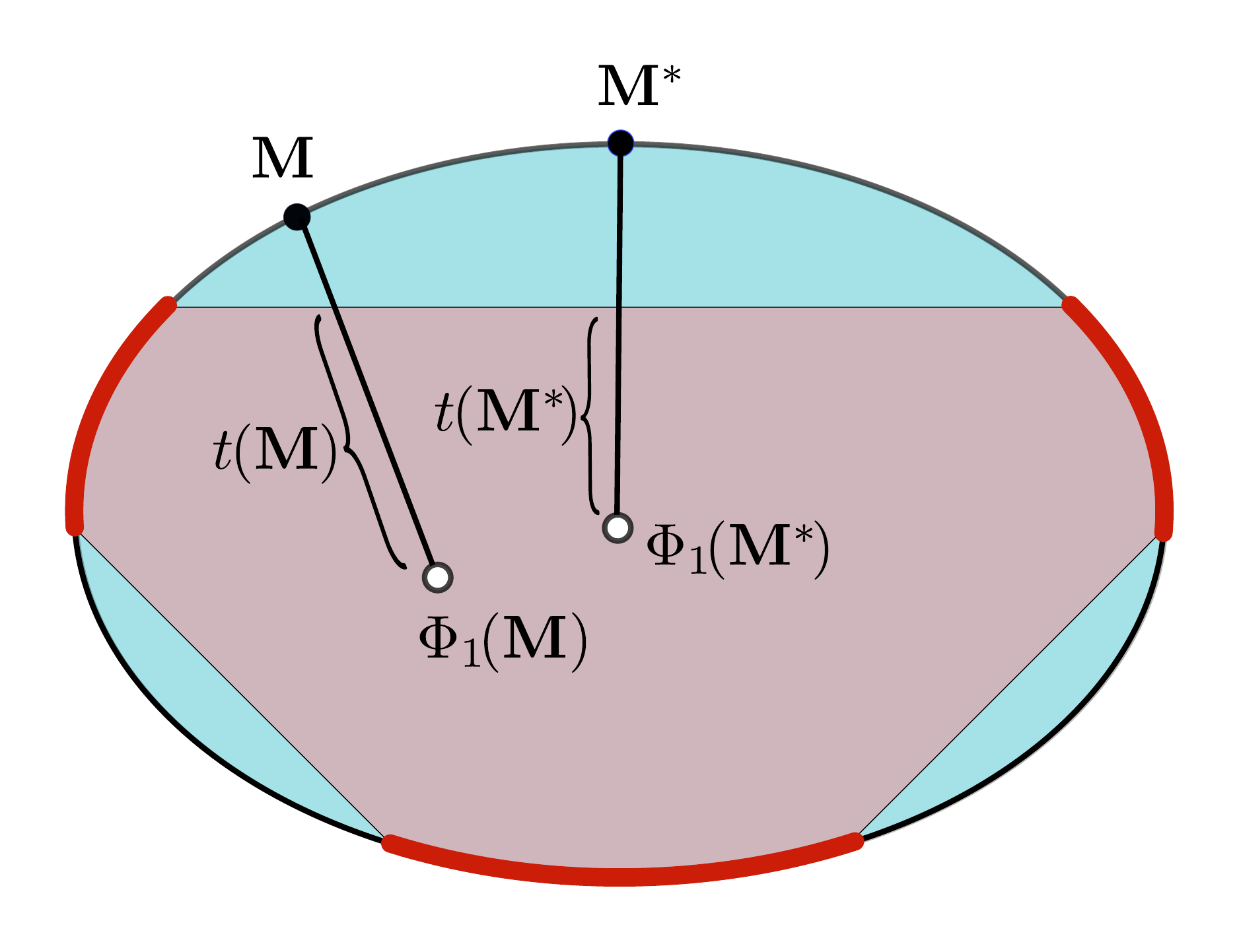}
\caption{(Color online) A schematic representation of the geometry of POVMs for given dimension. Solid red lines represent extremal projective measurements. Their convex hull (shaded in red) represents the set of PM-simulable measurements. For a non-projective POVM $\M$ the quantity $t(\M)$ denotes the critical visibility for which it becomes PM-simulable. The POVM most robust to noise is denoted by  $\M^\ast$ and is characterised by the smallest critical visibility  $t(\M^{\ast})$. }
\label{Geometry}
\end{figure}

\paragraph*{General results on projective simulability---}

Having introduced the main concepts, we start by showing that ultimately post-processing plays no role regarding projective simulability and the set of PM-simulable measurements can be obtained by just randomisation of PMs.

\begin{fact}[Operational characterisation of the convex hull of projective measurements]\label{convHullOp}

Let $\PP(d,n)$ and $\SP(d,n)$ be respectively the set of projective and PM-simulable measurements. Then, 
\begin{equation}\label{eq:operVSmat}
\SP(d,n)=\PP(d,n)^{\mathrm{conv}}\ ,
\end{equation}
where $\PP(d,n)^{\mathrm{conv}}$ is the convex hull of the set of PMs, consisting of POVMs of the form $\sum_i p_i \P_i$, where $\P_i\in\PP(d,n)$, $\ p_i\geq 0$ and $\sum_i p_i =1$.
\end{fact}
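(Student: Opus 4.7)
The plan is to prove the two inclusions defining the claimed equality separately. The inclusion $\PP(d,n)^{\mathrm{conv}}\subseteq \SP(d,n)$ is immediate from the definitions: any convex combination $\sum_i p_i \PPP_i$ of projective measurements is realised as a simulation strategy by taking the trivial (identity) post-processing $q(i|j)=\delta_{ij}$. Hence, all the work lies in the reverse inclusion $\SP(d,n)\subseteq \PP(d,n)^{\mathrm{conv}}$, i.e. in showing that classical post-processing can always be absorbed into the randomisation step without leaving the convex hull of PMs.

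For the nontrivial direction, let $\M\in\SP(d,n)$, so that $M_i=\sum_l p_l \sum_j q_l(i|j)\, (P^{(l)})_j$ for some PMs $\PPP^{(l)}=((P^{(l)})_1,\dots,(P^{(l)})_n)\in\PP(d,n)$, probabilities $p_l$, and column-stochastic matrices $q_l(\cdot|\cdot)$. The first key step is a standard fact from classical probability: every column-stochastic $n\times n$ matrix is a convex combination of deterministic (0/1) column-stochastic matrices, i.e. $q_l(i|j)=\sum_k \lambda^{(l)}_k\, \delta_{i,f^{(l)}_k(j)}$ for functions $f^{(l)}_k\colon\{1,\dots,n\}\to\{1,\dots,n\}$ and weights $\lambda^{(l)}_k\geq 0$ with $\sum_k \lambda^{(l)}_k=1$. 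Substituting this decomposition yields
\begin{equation}
M_i \;=\; \sum_{l,k} p_l\, \lambda^{(l)}_k \sum_{\,j\,:\,f^{(l)}_k(j)=i} (P^{(l)})_j\ .
\end{equation}

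The second key step is the observation that, for each fixed pair $(l,k)$, the $n$-tuple with entries $\tilde P^{(l,k)}_i \defeq \sum_{j:\,f^{(l)}_k(j)=i}(P^{(l)})_j$ is itself a projective measurement in $\PP(d,n)$. Indeed, since $\{(P^{(l)})_j\}_j$ is a family of mutually orthogonal projectors summing to $\I$, any grouping of them along a partition of $\{1,\dots,n\}$ (here the fibres of $f^{(l)}_k$) produces a new family of mutually orthogonal projectors summing to $\I$: orthogonality across blocks is inherited from the original family, while idempotency within a block follows from $(\sum_{j\in S}(P^{(l)})_j)^2=\sum_{j\in S}(P^{(l)})_j$ using pairwise orthogonality. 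This is precisely where the projective structure is used, and it is the only step that could be considered the main ingredient of the argument, although it is conceptually straightforward.

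Combining the two steps gives $\M=\sum_{l,k} (p_l\lambda^{(l)}_k)\, \tilde\PPP^{(l,k)}$ with nonnegative coefficients summing to one, exhibiting $\M$ as an element of $\PP(d,n)^{\mathrm{conv}}$. There is no real obstacle in this argument; the only subtlety is the use of the convention that $\PP(d,n)$ already allows zero effects and need not have rank-one projectors, which is what makes the coarse-graining $\PPP^{(l)}\mapsto \tilde\PPP^{(l,k)}$ land inside $\PP(d,n)$ rather than in a smaller set. The conclusion is the operational statement that for PMs, post-processing reduces to (relabelling and) coarse-graining, both of which can be simulated by randomisation over PMs.
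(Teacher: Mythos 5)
Your proof is correct and follows essentially the same route as the paper's: the easy inclusion via trivial post-processing, and the converse by noting that extremal (deterministic) post-processings are exactly relabellings and coarse-grainings, which preserve the set of projective measurements. You merely spell out the details the paper leaves implicit, in particular the decomposition of a stochastic matrix into deterministic ones and the verification that grouping orthogonal projectors along the fibres of a function again yields a projective measurement in $\PP(d,n)$.
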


\begin{proof}
The inclusion $\PP(d,n)^{\mathrm{conv}} \subset \SP(d,n)$ follows from the fact that classical randomisation of $\PP(n,d)$ gives exactly $\PP(d,n)^{\mathrm{conv}}$.
The converse inclusion $\SP(d,n)\subset \PP(d,n)^{\mathrm{conv}}$ comes from the observation that the set $\PP(d,n)^{\mathrm{conv}}$ is preserved under classical post-processing. This is because extremal post-processing transformations are precisely relabellings and coarse-grainings that preserve the set of projective POVMs.
\end{proof}

Since not all extremal POVMs are projective~\cite{DAriano2005,Chiribella2007}, it follows from the above that for any $d$ not all POVMs on $\H$ are PM-simulable, as expected. An exemplary non-projective extremal POVM for $d=2$ is a tetrahedral measurement $\M^{\mathrm{tetra}}$.
This is a four-outcome POVM whose effects are given by $
M^{\mathrm{tetra}}_i =\frac{1}{4}\left(\I+\vec{n}_i \cdot \vec{\sigma}\right)$, 
where the unit vectors $\vec{n}_i$ are vertices of a regular tetrahedron inscribed in the Bloch sphere and $\vec \sigma$ is the vector of Pauli matrices.

The set of PM-simulable measurements is also interesting from the perspective of the following generalisation of Naimark's theorem for POVMs on $\H$.

\begin{thm}[Generalised Naimark's dilatation theorem]\label{thm:Naimark}
 Let $\SP(\H\ot\H', nd)$ be the set of projective-simulable, $nd$-outcome POVMs on $\H\ot\H'$, where $\dim(\H)=\dim(\H')=d$. Let $\M\in\P(\H,n)$ be an arbitrary $n$-outcome POVM on $\H$ and let $\kb{\phi}{\phi}$ be a fixed pure state on $\H'$. There exist a projective-simulable POVM $\N\in\SP(\H\ot\H', nd)$ such that 
\begin{equation}\label{eq:simulationTEH}
\tr(\rho M_i)= \tr(\rho\ot \kb{\phi}{\phi} N_i) 
\end{equation}
for $i=1,\ldots,n$, and all states $\rho$ on $\H$.
Moreover, $d$ is the minimum possible dimension for ancilla systems with this property.
\end{thm}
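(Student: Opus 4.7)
The plan is to handle the two halves of the statement separately: existence of a projective-simulable $\N$, and minimality of the ancilla dimension $d$. For existence, the idea is to express $\M$ as a convex mixture of extremal POVMs on $\H$, apply standard Naimark dilation to each piece (which fits inside $\H\otimes\H'$ because extremal POVMs on $\H$ have rank-sum at most $d^2=\dim(\H\otimes\H')$), and recognise the mixture of the resulting projective measurements as PM-simulable by Fact~\ref{convHullOp}.

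In more detail, write $\M=\sum_\alpha p_\alpha\,\M^{(\alpha)}$ with each $\M^{(\alpha)}\in\P(\H,n)$ extremal. By the classification of~\cite{DAriano2005,Chiribella2007}, each $\M^{(\alpha)}$ has at most $d^2$ non-zero effects, all of rank~$1$ and linearly independent, so its rank-sum does not exceed $\dim(\H\otimes\H')=d^2$. For the prescribed isometry $V\colon\H\to\H\otimes\H'$ defined by $V\ket{v}\defeq\ket{v}\otimes\ket{\phi}$, standard Naimark then produces a projective measurement $\N^{(\alpha)}$ on $\H\otimes\H'$ satisfying $V^\dagger N^{(\alpha)}_iV=M^{(\alpha)}_i$ for $i=1,\ldots,n$, with $\sum_i N^{(\alpha)}_i=I_{\H\otimes\H'}$; padding with $n(d-1)$ null projectors brings the outcome count to $nd$ while leaving the pull-back identity intact. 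Setting $\N\defeq\sum_\alpha p_\alpha\,\N^{(\alpha)}$ yields a convex combination of PMs on $\H\otimes\H'$, hence a PM-simulable POVM by Fact~\ref{convHullOp}; linearity of $X\mapsto V^\dagger XV$ then gives $V^\dagger N_iV=M_i$ for $i\leq n$, which is precisely~\eqref{eq:simulationTEH}.

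For the matching lower bound, I would argue by contradiction: suppose some $\N\in\SP(\H\otimes\H',nd)$ with $\dim(\H')=d'<d$ realises $\M$. Writing $\N=\sum_\beta q_\beta\,\N^{(\beta)}$ with each $\N^{(\beta)}$ projective on $\H\otimes\H'$ and pulling back by $V$ produces a representation $\M=\sum_\beta q_\beta\,\M^{(\beta)}$ in which every $\M^{(\beta)}$ has rank-sum at most $\dim(\H\otimes\H')=dd'\leq d^2-d$. Choosing $\M$ to be any rank-$1$ extremal POVM on $\H$ with $d^2$ linearly independent effects (such a POVM exists in every dimension by the structure theorems of~\cite{DAriano2005,Chiribella2007}, and a SIC-POVM when available gives a concrete example), extremality of $\M$ forces $\M^{(\beta)}=\M$ for every $\beta$ with $q_\beta>0$, hence rank-sum $d^2$, contradicting the bound $d^2-d$. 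The step I anticipate as most delicate is ensuring in the existence proof that the Naimark dilations of the $\M^{(\alpha)}$ can be built using the \emph{same} fixed ancilla state $\ket{\phi}$, so that the prescribed isometry $V=I_\H\otimes\ket{\phi}$ works uniformly over $\alpha$; this is what ultimately couples the pieces into a single well-defined $\N$ on $\H\otimes\H'$.
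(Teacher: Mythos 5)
Your overall strategy coincides with the paper's: decompose into extremal POVMs, dilate each piece via the standard Naimark theorem into $\H\ot\H'$ along the fixed isometry $V=\I\ot\ket{\phi}$, take the convex mixture, and obtain minimality of the ancilla dimension from the existence of extremal rank-one POVMs with $d^2$ linearly independent effects. The lower-bound half of your argument is correct and is essentially the paper's argument recast as a contradiction. The delicate point you flag at the end --- that all the dilations can be built over the same prescribed isometry $V$ --- is real but resolves exactly as you expect: any two isometries of $\H$ into $\H\ot\H'$ differ by a unitary of $\H\ot\H'$, so a minimal dilation can always be rotated so that $\H$ sits inside $\H\ot\H'$ as $\H\ot\ket{\phi}$; the paper makes the same observation by noting that the Naimark basis construction depends only on the POVM and the number of outcomes, not on how $\H$ is embedded in the larger space.

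There is, however, one false intermediate claim in your existence argument: you assert that every extremal $\M^{(\alpha)}\in\P(\H,n)$ has effects that are ``all of rank $1$''. This is not true --- the trivial measurement $(\I,0,\ldots,0)$ is extremal with a rank-$d$ effect, and the paper's own modified trine measurement $\left(T_1,T_2,T_3+\kb{2}{2}\right)\in\P(3,3)$ is extremal with a rank-$2$ effect. The paper circumvents this by first refining $\M$ to the rank-one $nd$-outcome POVM $\M'$ with effects $\lambda_i^{(j)}\kb{\psi_i^{(j)}}{\psi_i^{(j)}}$ obtained from the eigendecompositions of the $M_i$, so that $\M$ is a coarse-graining of $\M'$ and every extremal POVM appearing in a convex decomposition of $\M'$ automatically has rank-one effects. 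Your version is repairable without this step, because the conclusion you actually use --- rank-sum at most $d^2$ --- holds for \emph{all} extremal POVMs: the extremality criterion of~\cite{DAriano2005} forces $\sum_i \mathrm{rank}(M_i)^2\leq d^2$, hence $\sum_i\mathrm{rank}(M_i)\leq d^2$, and the minimal Naimark dilation of a POVM with effects of arbitrary ranks fits into any space of dimension at least the rank-sum. As written, though, the justification is incorrect and should be replaced by this rank-sum bound (or by the paper's rank-one refinement step).
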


The above theorem follows from results on extremality of POVMs~\cite{DAriano2005} and the standard Naimark dilation theorem~\cite{Peres2006}, and can be found in Supplemental Material  ~\cite{supp}.
In previous studies, the dimension of the auxiliary system needed to implement a given POVM with a PM (without randomisation and post-processing) scaled linearly with the number of outputs~\cite{Peres2006,Jozsa2003,covORTOmes}, while here it remains constant. The mentioned optimality of dimension $d$ establishes a minimum dimension cost for the auxiliary system, even when one is allowed to use
classical randomisation and post-processing.

Our second general result valid for arbitrary dimension is a lower bound on the visibility needed for general projective simulability, $t(d)$. Explicit computation of this quantity is a difficult problem due to the bilevel nature of the optimisation problem \eqref{eq:visibil}. The computation of $t(\M)$ for any given $\M\in\P(d,n)$ provides only an upper bound to $t(d)$. 
\begin{lem}[Lower bound for the critical visibility for arbitrary dimension]\label{lem"lowerbound}
For any dimension $d$, we have $t(d)\geq 1/d$. 
\end{lem}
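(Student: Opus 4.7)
My plan is to prove a stronger statement: for every POVM $\M\in\P(d,n)$, the depolarised POVM $\Phi_{1/d}(\M)$ belongs to $\SP(d,n)$. Applied to any $\M$ achieving the minimum in~\eqref{eq:visibil}, this immediately yields $t(d)\geq 1/d$. First I would reduce to the rank-one case: spectrally decomposing each $M_i=\sum_k \lambda_{ik}\kb{\psi_{ik}}{\psi_{ik}}$ with $\lambda_{ik}\in[0,1]$ produces a refined POVM $\tilde{\M}$ with rank-one effects $\lambda_{ik}\kb{\psi_{ik}}{\psi_{ik}}$, related to $\M$ by coarse-graining the index $k$. Because $\Phi_{1/d}$ is linear (hence commutes with coarse-graining) and the PM-simulable set is closed under post-processing by Fact~\ref{convHullOp}, it suffices to prove that $\Phi_{1/d}(\tilde{\M})$ is PM-simulable. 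We may therefore assume $M_i = c_i \kb{\psi_i}{\psi_i}$ with $c_i=\tr(M_i)\geq 0$ and $\sum_i c_i=d$.

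Next I would exhibit an explicit simulation. For each label $i'\in\{1,\dots,n\}$, introduce the binary projective measurement $\PPP^{(i')}\in\PP(d,n)$ whose only non-zero effects are $\kb{\psi_{i'}}{\psi_{i'}}$ (placed at position $i'$) and $\I-\kb{\psi_{i'}}{\psi_{i'}}$ (placed at a fixed auxiliary position). Consider the following strategy: select $i'$ with probability $c_{i'}/d$ and perform $\PPP^{(i')}$; if the outcome is $\kb{\psi_{i'}}{\psi_{i'}}$ output $i'$, otherwise draw a fresh label $j$ from the distribution $(c_j/d)_{j=1}^n$ and output $j$. As a mixture followed by post-processing of PMs, this yields a POVM $\N\in\SP(d,n)$ by definition.

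A direct computation then gives the $i$-th effect of $\N$ as
\begin{equation*}
N_i = \sum_{i'} \frac{c_{i'}}{d}\left(\delta_{i,i'}\kb{\psi_{i'}}{\psi_{i'}} + \frac{c_i}{d}\bigl(\I-\kb{\psi_{i'}}{\psi_{i'}}\bigr)\right) = \frac{M_i}{d} + \frac{c_i}{d}\left(\I-\frac{\I}{d}\right) = \frac{M_i}{d} + \frac{(d-1)\tr(M_i)}{d^2}\I,
\end{equation*}
where the middle equality uses the identity $\sum_{i'} c_{i'}\kb{\psi_{i'}}{\psi_{i'}} = \sum_{i'} M_{i'} = \I$. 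This matches $[\Phi_{1/d}(\M)]_i$ exactly, completing the argument. The only genuine (and relatively mild) obstacle is guessing the correct simulation recipe in the second step; once it is on the table, both the reduction to rank-one effects and the verification reduce to routine manipulations.
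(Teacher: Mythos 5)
Your proposal is correct and follows essentially the same route as the paper's proof: reduce to rank-one effects via coarse-graining (using that $\SP(d,n)$ is closed under post-processing and that $\Phi_{1/d}$ commutes with it), then simulate $\Phi_{1/d}(\M)$ by choosing $i'$ with probability $\tr(M_{i'})/d$, measuring $(\kb{\psi_{i'}}{\psi_{i'}},\I-\kb{\psi_{i'}}{\psi_{i'}})$, and re-randomising the output on the complementary outcome. The explicit protocol and the verification $N_i=\Phi_{1/d}(M_i)$ coincide with those in the Supplemental Material.
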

\begin{proof}[Sketch of the proof]
The idea is to first realise that for any measurement $\M$ and any post-processing $\Q$ we have $t(\Q(\M))\geq t(\M)$.
Then, it is enough to provide a projective simulation for the noisy rank-one POVMs, as every POVM consisting of operators of rank larger than one can be realised as coarse-graining of POVMs with rank-one effects. 
In Supplemental Material ~\cite{supp} we present an explicit simulation of rank-one POVMs using PMs for a visibility $t=1/d$.
\end{proof}


\paragraph*{Projective simulability for qubit and qutrit measurements---}


For $d=2$, PMs have at most 2 outcomes and we can reduce our problem to the study of 2-outcome POVMs.

\begin{lem}[Projective-simulable measurements for qubits]
\label{lem:main result}
For $d=2$, projective measurements can
simulate arbitrary two-outcome measurements.
In other words we have
\begin{equation} \label{eq:crucialfact1}
\SP\left(2,n\right) = \P\left(2,n;2\right)^{\mathrm{conv}} \ . 
\end{equation}
\end{lem}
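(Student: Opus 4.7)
The plan is to combine Fact~\ref{convHullOp} with an explicit Bloch-sphere decomposition, establishing the two inclusions separately. The easy direction $\SP(2,n)\subseteq \P(2,n;2)^{\mathrm{conv}}$ follows from Fact~\ref{convHullOp} once one notes that on $\C^2$ every projector has rank $0$, $1$, or $2$. Hence any $\PPP\in\PP(2,n)$ has at most two nonzero effects (a pair of orthogonal rank-one projectors, or a single rank-two effect equal to $\I$), which places $\PPP$ inside $\P(2,n;2)$; taking convex hulls gives $\SP(2,n)=\PP(2,n)^{\mathrm{conv}}\subseteq \P(2,n;2)^{\mathrm{conv}}$.

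For the reverse inclusion $\P(2,n;2)^{\mathrm{conv}}\subseteq\SP(2,n)$, by convexity it suffices to show that every two-outcome qubit measurement lies in $\PP(2,n)^{\mathrm{conv}}$. After relabelling (a classical post-processing), such a POVM has effects $(M,\I-M,0,\ldots,0)$, so the task reduces to decomposing $(M,\I-M)$ as a convex mixture of two-outcome projective qubit POVMs. Writing $M=\alpha\I+\vec{v}\cdot\vec{\sigma}$ in Bloch form and setting $P_{\hat{n}}\defeq\tfrac{1}{2}(\I+\hat{n}\cdot\vec{\sigma})$ with $\hat{n}=\vec{v}/|\vec{v}|$, a direct calculation gives
\begin{equation*}
(M,\I-M)=(\alpha-|\vec{v}|)(\I,0)+2|\vec{v}|(P_{\hat{n}},\I-P_{\hat{n}})+(1-\alpha-|\vec{v}|)(0,\I),
\end{equation*}
with weights summing to $1$. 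The positivity constraints $M\geq 0$ and $\I-M\geq 0$ translate exactly into $\alpha-|\vec{v}|\geq 0$ and $1-\alpha-|\vec{v}|\geq 0$, so the three weights are nonnegative. Padding each term with $n-2$ null outcomes yields three elements of $\PP(2,n)$, proving that $(M,\I-M,0,\ldots,0)\in \PP(2,n)^{\mathrm{conv}}=\SP(2,n)$.

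There is no real obstacle; the only point to verify carefully is the degenerate case $\vec{v}=0$, where $M=\alpha\I$ and the decomposition collapses to $\alpha(\I,0)+(1-\alpha)(0,\I)$, still a convex combination of projective measurements. Conceptually, the proof reflects the geometric fact that every qubit effect operator sits on a line segment joining $0$ or $\I$ with a rank-one projector; this feature is special to $d=2$ and is precisely what will fail in higher dimensions, where genuinely non-projective extremal POVMs (such as the tetrahedral measurement mentioned above) obstruct an analogous statement.
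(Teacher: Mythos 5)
Your proof is correct and follows the same two-inclusion structure as the paper: the inclusion $\SP(2,n)\subseteq\P(2,n;2)^{\mathrm{conv}}$ via Fact~\ref{convHullOp} plus the observation that qubit PMs have at most two nonzero effects, and the reverse inclusion via the simulability of two-outcome measurements together with convexity of $\SP(2,n)$. The only difference is that where the paper cites \cite{Masanes2005,Fritz2010} for the fact that every two-outcome qubit POVM is PM-simulable, you supply an explicit (and correct) Bloch-sphere decomposition $(M,\I-M)=(\alpha-|\vec{v}|)(\I,0)+2|\vec{v}|(P_{\hat{n}},\I-P_{\hat{n}})+(1-\alpha-|\vec{v}|)(0,\I)$, which makes the argument self-contained.
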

\begin{proof}
Since projective measurements are a special case of 2-outcome measurements, we have $\mathbb{P}\left(2,n\right) \subset \P\left(2,n;2\right)$, and thus Lemma~\ref{convHullOp} implies that $\SP\left(2,n\right) \subset \P\left(2,n;2\right)^{\mathrm{conv}}$.
On the other hand, we know that every 2-output measurement is PM-simulable \cite{Masanes2005,Fritz2010}, $\P\left(2,n;2\right)\subset \SP\left(2,n\right)$.
Since $\SP\left(2,n\right)$ is a convex set, we have $\P\left(2,n;2\right)^{\mathrm{conv}}\subset \SP\left(2,n\right)$, which completes the proof.
\end{proof}

Characterisation \eqref{eq:crucialfact1} of PM-simulable measurements for qubits allows for efficiently describing this set in terms of an SDP  programme (see Supplemental Material~\cite{supp} for details). This SDP can be further modified to calculate the critical visibility $t(\M)$ for a fixed 4-outcome qubit POVM. The application of this SDP to a tetrahedral measurement $\M^{\mathrm{tetra}}$ yields the value $t_{\mathrm{tetra}}=\sqrt{2/3}\approx 0.8165$.
In Supplemental Material \cite{supp} we provide the analytical strategy for simulating $\M^{\mathrm{tetra}}$ via PMs for the visibility $t=\sqrt{2/3}$.

It is natural to suspect that for arbitrary $d$ the generalisation of Lemma~\ref{lem:main result} holds, in the sense that $\SP(d,n)=\P(d,n;d)^{\mathrm{conv}}$. However, this is already false for $d=3$. A counterexample is a modified trine measurement $\M\defeq\left(T_1, T_2, T_3 +\kb{2}{2}\right)\in\P(3,3)$, where $T_i = \frac23\kb{\psi_i}{\psi_i}$, $\ket{\psi_i}=\cos(\pi i /3 ) \ket{0}+ \sin(\pi i /3 ) \ket{1}$, are the effects of the trine qubit measurement $\M^{\mathrm{trine}}$~\cite{Jozsa2003}. The extremality of $\M$ follows from the the extremality of $\M^{\mathrm{trine}}$, since a decomposition of the former projected onto the subspace spanned by $\{\ket{0}, \ket{1}\}$ would provide a decomposition for the latter. Fortunately, for qutrits we still have a convenient description of the PM-simulable measurements.

\begin{lem}[Projective-simulable measurements for qutrits]
\label{lem:mainresult2}
For $d=3$, projective measurements can simulate 
2-outcome measurements and 3-outcome measurements with trace-1 effects. That is,
\begin{equation}\label{eq:crucialfact2}
\SP\left(3,n\right) = \left(\P\left(3,n;2\right) \cup \P_1\left(3,n;3\right)\right)^{\mathrm{conv}}\ \ , 
\end{equation}
where $\P_1\left(3,n;3\right) \subset \P\left(3,n\right)$ denotes the set of 3-output measurements on $\C^3$ with effects having unit trace.
\end{lem}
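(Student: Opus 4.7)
The plan is to invoke Fact~\ref{convHullOp} and partition the projective measurements on $\C^3$ by their rank profile. Any PM on $\C^3$ has nonzero effects that are mutually orthogonal projectors whose ranks sum to $3$, so the rank profile must be one of $(3)$, $(2,1)$ or $(1,1,1)$: the first two give measurements with at most two nonzero effects (hence in $\P(3,n;2)$), while the third gives three rank-one, and therefore unit-trace, effects (hence in $\P_1(3,n;3)$). Consequently $\PP(3,n)\subset \P(3,n;2)\cup \P_1(3,n;3)$, and Fact~\ref{convHullOp} yields $\SP(3,n)\subset (\P(3,n;2)\cup \P_1(3,n;3))^{\mathrm{conv}}$.

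For the reverse inclusion, convexity of $\SP(3,n)$ reduces the task to establishing $\P(3,n;2)\subset\SP(3,n)$ and $\P_1(3,n;3)\subset\SP(3,n)$ separately. The first is dimension-free and proceeds exactly as in Lemma~\ref{lem:main result}: diagonalise any two-outcome POVM as $M=\sum_i \lambda_i\kb{v_i}{v_i}$ and express $(M,\I-M)$ as a convex combination of the projective measurements $(\sum_{i\in S}\kb{v_i}{v_i},\sum_{i\notin S}\kb{v_i}{v_i})$ for $S\subset\{1,2,3\}$, with weights $p_S$ chosen via a quantile coupling on the sorted $\lambda_i$'s so that $\sum_{S\ni i}p_S=\lambda_i$.

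The substantive step is the inclusion $\P_1(3,n;3)\subset\SP(3,n)$, which by padding with zero effects reduces to showing that $\P_1(3,3;3)$ is contained in the convex hull of rank-one projective measurements. My approach is a Krein--Milman-type classification of the extremal points of the compact convex set $\P_1(3,3;3)$: an element $\M=(M_1,M_2,M_3)$ is non-extremal iff there is a nonzero Hermitian triple $(D_1,D_2,D_3)$ with $D_i$ supported in $\mathrm{range}(M_i)$, $\tr(D_i)=0$ and $\sum_i D_i=0$. A dimension count bounds the space of such triples from below by $\sum_i r_i^2-11$, where $r_i=\mathrm{rank}(M_i)$, so extremality forces $\sum_i r_i^2\leq 11$, leaving only the rank profiles $(1,1,1),(2,1,1),(3,1,1),(2,2,1)$. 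The profile $(1,1,1)$ yields exactly the rank-one PMs, which are trivially extremal; the profiles $(2,1,1)$ and $(3,1,1)$ are ruled out because $M_1=\I-\kb{v_2}{v_2}-\kb{v_3}{v_3}$ has spectrum $\{-c,c,1\}$ with $c=|\bk{v_2}{v_3}|\in[0,1]$, so positivity forces $c=0$, which in turn collapses $M_1$ to rank~$1$, contradicting the assumed rank.

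The non-generic profile $(2,2,1)$ is the main obstacle, since the generic dimension bound is not sharp when the ranges of the $M_i$ overlap. Writing $M_3=\kb{v}{v}$ forces both $\mathrm{range}(M_1)$ and $\mathrm{range}(M_2)$ to coincide with the two-dimensional subspace $V$ orthogonal to $\ket{v}$, and I would certify non-extremality by an explicit perturbation in this shared subspace: for any nonzero traceless Hermitian operator $D$ supported in $V$, the triple $(D_1,D_2,D_3)=(D,-D,0)$ satisfies all three constraints. The case analysis combined with Krein--Milman then yields $\P_1(3,3;3)=\mathrm{conv}(\text{rank-one PMs})\subset\SP(3,3)$, and padding extends this to $\P_1(3,n;3)\subset\SP(3,n)$, completing the proof.
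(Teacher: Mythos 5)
Your proposal is correct and follows essentially the same route as the paper: the inclusion $\SP(3,n)\subset\left(\P(3,n;2)\cup\P_1(3,n;3)\right)^{\mathrm{conv}}$ via Fact~1 and the rank profiles of qutrit PMs, the reduction of the converse to two-outcome simulability plus the claim that the extremal points of $\P_1(3,3)$ are exactly the rank-one projective measurements, and a proof of that claim by the method of perturbations with a case analysis over rank profiles. The only (cosmetic) difference is that your uniform dimension count $\sum_i r_i^2\leq 11$ dispatches in one stroke the profiles that the paper's Supplemental Material handles by separate explicit perturbations (e.g.\ the $(3,a,b)$ and $(2,2,2)$ cases), which is a slightly cleaner organisation of the same argument.
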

\begin{proof}[Proof sketch]
The main part of the proof is to show that the convex hull of qutrit rank-1 PMs simulate arbitrary 3-outcome POVMs having trace-1 effects (the intuition for that comes from the fact that effects of rank-1 PMs satisfy $\tr(M_i)=1$ and this property is preserved under
convex combinations).

We provide the details of the proof in Supplemental Material~\cite{supp}.
\end{proof}

Similarly to the case of qubits, the above characterisation of PM-simulable qutrit POVMs
reduces deciding whether a measurement $\M$ is projective-simulable to the computation of $t(\M)$ via an SDP. 
In Supplemental Material~\cite{supp} we give the explicit SDP programs together with the algorithm (based on the ``method of perturbations''~\cite{Chiribella2004}) extracting a specific projective-decomposition from their solution. 

Again, the generalisation of Lemma~\ref{lem:mainresult2} does not hold, in the sense that for $d > 3$ there are extremal trace-one $d$-outcome non-projective POVMs.
Indeed, consider the trace-1 POVM $\T = \M_{12} + \M_{34} \in \P(4,4)$ defined by the sum of two copies of $\M^{\mathrm{tetra}}$ supported in orthogonal two-dimensional subspaces of $\C^4$.
Again, $\T$ is extremal since the projections of its decomposition would imply a decomposition for $\M^{\mathrm{tetra}}$.

We conclude this part by studying the critical visibility for the case of qubits, $t(2)$. Clearly, $t(2)\leq t_{\mathrm{tetra}}=\sqrt{2/3}$. In what follows, we provide a lower bound to $t(2)$ that is very close to this value. To derive lower bounds on $t(2)$, we approximate the set of 4-outcome POVMs $\P(2,4)$ \emph{from the outside} by a polytope of ``quasi-POVMs'' $\Delta\subset\Herm(\C^2)^{\times 4}$. As by construction the set of 4-outcome qubit POVMs is strictly included in $\Delta$, the solution $t_\Delta$ to the following optimisation problem
\begin{equation}\label{eq:approxMin}
t_\Delta\defeq\min \SET{t(\M)}{\M\in\Delta}\ ,
\end{equation}
where $t(\cdot)$ is the function \eqref{eq:visibil} formally extended to $\Herm(\C^2)^{\times 4}$, provides a lower bound to $t(2)$. Since $t(\cdot)$ is a concave function, the minimum in \eqref{eq:approxMin} is attained for at least one of the vertices of $\Delta$. The advantage is now that $\Delta$ is a polytope and therefore has a finite number of vertices. Thus, to solve the optimisation problem \eqref{eq:approxMin} it suffices to compute the value $t(\mathcal{V})$ for each vertex $\mathcal{V}\in\Delta$, and take $t_\Delta=\min_\mathcal{V} t(\mathcal{V})$. As above, the value $t(\mathcal{V})$ can be computed efficiently in terms of an SDP program (see the discussion below Lemma~\ref{lem:main result}). The closer the outer polytope to the set of POVMs, the tighter the lower bound to $t(2)$.

Let us describe briefly our construction of the polytope of ``quasi-POVMs'' containing the set of general measurements. By definition, $(M_1,\ldots,M_4) \in\P(2,4)$ if and only if $\sum_{i=1}^4 M_i =\I$ and $M_i\geq 0$.
Recall that the positivity of effect $M_i$ is equivalent to $\tr(M_i \kb{\psi}{\psi})\geq 0$ for all pure states $\kb{\psi}{\psi}$. We define a polytope $\Delta$ by keeping the normalisation condition $\sum_{i=1}^4 M_i =\I$ and relaxing the positivity of effects of $\M$ by setting  $\tr\left(M_{i}\kb{\psi_j}{\psi_j}\right)\geq 0$, where $\ket{\psi_j}$ are suitably chosen pure states. In general, the more pure states are used, the better $\Delta$ approximates the set $\P(2,4)$. Also, to ensure that $\Delta$ is compact and thus is a polytope we have to assume that projectors $\kb{\psi_j}{\psi_j}$ span $\Herm(\C^2)$. Using the above ideas 
we are able to obtain the lower bound $t_\Delta = 0.8143$, 
which up to two digits of precision matches the value for the critical visibility for the tetrahedron POVM $t_{\mathrm{tetra}}=\sqrt{2/3}$ (see Supplemental Material~\cite{supp}). 

In Supplemental Material we present the results of exploratory studies of application of quasi POVMs for PM-simulability for qutrit POVMs that are covariant with respect to $\mathbb{Z}_3 \times \mathbb{Z}_3$ group \cite{Renes2003}. 
  We expect that quasi-POVMs can be applied in other problems, such as construction of local hidden-variable models for general quantum states~\cite{Dani2015,Hirsch2015} or the problem of joint measurability~\cite{Heinosaari2015,Jessica2016}.

\paragraph*{Improved local-hidden variable model for Werner states---}

In the Bell nonlocality scenario~\cite{Bell1964,Brunner2014}, spatially separated parties perform local measurements on a shared multipartite quantum state.
Some correlations obtained in this form violate the so-called Bell inequalities, which are conditions satisfied by all local hidden-variable (LHV) models. Apart from their fundamental importance, Bell inequalities lie at the center of many non-classical features of quantum cryptography and quantum computation.
Although entanglement is a necessary resource for nonlocality, it is not sufficient, since some entangled states admit LHV models.

Despite the fact that general approaches have been proposed~\cite{Dani2015,Hirsch2015}, the most studied family of states regarding LHV models are the Werner states introduced in~\cite{Werner1989}. While these states are defined for arbitrary dimension, here we focus on the case of two qubits. Werner states then read as $\rho_W(p) = p\Psi_- + (1-p)\I_4/4$, where $\Psi_-$ denotes the projector onto the singlet state and $\I_4$ the identity in dimension four. Initially, Werner proved that these states have a local model under PMs for $p\leq 1/2$. Later, it was shown in~\cite{Acin2006} that the critical noise at which they cease to violate any Bell inequality under projective measurements is equal to the inverse of the Grothendieck constant of order 3, $p^*=1/K_G(3)$. The best known upper bound on this constant can be found in \cite{Geneva2016}, which implies that the critical noise is $p^*\approx 0.68$.
For general measurements, Barrett showed that for $p \leq 5/12 \approx 0.416$ these states have an LHV model~\citep{Barrett2002}.
We now provide an improvement over this result using our formalism.

The result is based on the mentioned self-duality of the depolarizing channel, which was also used in the context of Bell inequalities in~\cite{Bowles2015}. In the bipartite case, it implies that local noisy measurements $\Phi_t(\M _A), \Phi_t(\M _B)$
acting on the two parts of a Werner state $\rho_W(p)$ 
are equal to noise-free measurements $\M_A, \M_B$ acting on $\rho_W(t^2p)$. 
This means that perfect POVMs acting on a Werner state of visibility $2p^*/3$ produce the same statistics as noisy POVMs with visibility $\sqrt{2/3}$ acting on Werner states of visibility $p^*$. But we have shown that at measurement visibility
$t_\Delta$, POVMs can be mimicked by PMs. And at the state visibility $p^*$, Werner states have a local model under these measurements. Putting all these things together, we achieve a new bound $p =t_\Delta^2p^*\approx 0.4519$ for the POVM-locality of two-qubit Werner states, improving over the previous best known result~\citep{Barrett2002}.

\paragraph*{Discussion---}

There is a number of open problems related to the subject of our work. First, it is natural to ask whether for arbitrary dimension the set of PM-simulable measurements can be characterised in terms of SDP programmes and what is the complexity of certification of PM-simulability. Another interesting problem is to identify those POVMs 
that are most robust to PM-simulability, which we conjecture to be symetric-information-complete (SIC) measurements. Another relevant open question is to understand the scaling of the critical value $t(d)$ for projective-simulability. This could be useful to improve the existing local models for Werner states and generalisations of it~\cite{Almeida2010}, as done here for two qubits. Also, the applications of the generalised Naimark theorem still remain to be explored. It is interesting to interpret our results in the context of resource theories. One can consider a resource theory for POVM measurements. In this theory, the resource is the non-projective character of the measurement, projective measurements are the free objects, while mixing and classical processing are the free operations.
 The full construction of the POVM resource theory is an interesting research problem and our work represents the first step in this direction.

\begin{acknowledgments} 
We thank S. Pironio for discussions at early stages of this work. We specially thank T. V\'ertesi for sharing his numerical results with us and discussions.  We also thank N. Brunner, F. Hirsch, M.T. 
 (CoG QITBOX), Axa Chair in Quantum Information Science, 
 Spanish MINECO (FOQUS FIS2013-46768 and Severo Ochoa SEV-2015-0522),
 Fundaci\'{o} Privada Cellex, and Generalitat de Catalunya (SGR 875). P. W acknowledges  computational resources granted by the High Performance Computing Center North (SNIC 2016/1-320). L. G acknowledges CAPES (Brazil). 
 
 {Note added---}During the development of this project we
 became aware of a complementary work by F. Hirsch et al. \cite{Geneva2016} where the authors computed exactly the 
 critical visibility $t(2)=\sqrt{2/3}$ and used it  as here to improve the local model for all POVMs for Werner states.
\end{acknowledgments}

\bibliography{refmeas1}

\onecolumngrid

\part*{Supplemental Material}

\setcounter{equation}{8}

In what follows we present the technical details that complement the main manuscript. In Part~\ref{app:Naimark} we give a detailed proof and a discussion of the generalised Naimark theorem. Subsequently, in Part~\ref{app:tech}, we prove several minor technical results that were stated in the main text. In Part~\ref{app:SDP} , we present explicitly the SDP program deciding the projective-simulability of POVMs for qubits and qutris, as well as the simulability of general POVMs by POVMs with fixed number of outcomes. Finally, in Part~\ref{app:Numerics} we give the details of the numerical techniques used. In, particular we discuss the comprehensive construction of polytopes of ``quasi-POVMs'' for qubits.

\begin{table}[h]
	\begin{centering}
		\begin{tabular}{|c|c|}
			\hline
			\textbf{Symbol} & \textbf{Explanation}\tabularnewline
			\hline
			$\H$ & $d$-dimensional Hilbert space\tabularnewline
			\hline
			$\Herm(\H)$ & Set of Hermitian operators acting on $\H$ \tabularnewline
			\hline
			POVM & Positive operator valued measure \tabularnewline
			\hline
			$\M,\N,\T...$ & Specific POVMs \tabularnewline
			\hline
			$M_i,N_i,T_i...$ & Effects of the corresponding to POVMs \tabularnewline
			\hline
			$\P(d,n)$ & Set of $d$-dimensional POVMs with at most $n$ outcomes \tabularnewline
			\hline
			$\P(d,n;m)$ & Set of $d$-dimensional POVMs with $m\leq n$ outcomes and $n-m$ null effects \tabularnewline
			\hline
			$\PP(d,n)$ & set of $d$-dimensional projective measurements  with at most $n$ outcomes \tabularnewline
			\hline
			$\SP(d,n)$ & Set of projective-simulable POVMs inside $\P(d,n)$ \tabularnewline
			\hline
			$[n]_m$ & $m$-element subsets of the set $\{1,\ldots,n\}$ \tabularnewline
			\hline
			$t(\M)$ & Maximum $t$ for which the POVM $\Phi_t(\M)$ is projective-simulable \tabularnewline
			\hline
			$t(d)$ & Minimum $t$ for which $\Phi_t(\M)$ is projective-simulable for all $\M\in\P(d,n)$ \tabularnewline
			\hline
		\end{tabular}
		
		\par\end{centering}
	
	\caption{Notation used throughout the paper and Supplemental Material}
\end{table}

\appendix

\section{Generalisation of the Naimark theorem for POVMs}\label{app:Naimark}

In this part, we first state and prove the generalised version of Naimark theorem. After the proof, we present an explicit algorithm for the simulation of arbitrary generalised measurements via PM-simulable measurements on a given system together with an ancilla of the same dimension.

\begin{thm}[Generalised Naimark dilation theorem for POVMs]\label{thm:NaimarkTEH} 
	Let $\SP(\H\ot\H', nd)$ be the set of projective-simulable $nd$-outcome POVMs on $\H\ot\H'$, where $\dim(\H)=\dim(\H')=d$. Let $\M\in\P(\H,n)$ be an arbitrary $n$-outcome POVM on $\H$ and let $\kb{\phi}{\phi}$ be a fixed pure state on $\H'$. Then there exists a projective-simulable POVM $\N\in\SP(\H\ot\H', nd)$ such that 
	\begin{equation}\label{eq:simulationTEH}
	\tr(\rho M_i)= \tr(\rho\ot \kb{\phi}{\phi} N_i) 
	\end{equation}
	for $i=1,\ldots,n$, and all states $\rho$ on $\H$. Moreover, $d$ is the minimum possible dimension for ancilla systems with this property.
\end{thm}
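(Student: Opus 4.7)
The plan is to combine the extremal decomposition of POVMs with the standard Naimark dilation, and then establish optimality by an extremality argument. First I would write $\M=\sum_k p_k \M^{(k)}$ with each $\M^{(k)}$ extremal in $\P(d,n)$; by the results of \cite{DAriano2005,Chiribella2007} cited in the main text, each such extremal has at most $d^2$ nonzero effects, and by refining if necessary the effects may be taken to be rank one. For each $\M^{(k)}$ I invoke the Naimark theorem in its tensor-product form: there is a projective measurement $\PPP^{(k)}$ on $\H\otimes\H'$ with $\dim\H'=d$ (total dimension $d^2$, large enough to host the $\leq d^2$ rank-one effects of $\M^{(k)}$) such that, for the prescribed pure state $\ket\phi\in\H'$,
\begin{equation*}
\tr\bigl(\rho\, M^{(k)}_i\bigr)=\tr\bigl(\rho\otimes\kb{\phi}{\phi}\, P^{(k)}_i\bigr)
\end{equation*}
for all states $\rho$ on $\H$ and all outcomes $i$. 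Padding each $\PPP^{(k)}$ with null effects to exactly $nd$ outcomes and setting $\N:=\sum_k p_k \PPP^{(k)}$, Fact~\ref{convHullOp} yields $\N\in\SP(\H\otimes\H',nd)$, while \eqref{eq:simulationTEH} follows by linearity.

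For the minimality statement, I would exhibit a single POVM on $\H$ whose simulation in this scheme already forces $\dim\H'\geq d$. Any extremal rank-one POVM $\M^\star$ with $d^2$ linearly independent effects does the job (a SIC-POVM where one is known to exist, or any other extremal rank-one informationally complete POVM). Assume for contradiction that some $\N\in\SP(\H\otimes\H'',nd)$ with $\dim\H''<d$ realises $\M^\star$ through the fixed ancilla $\kb{\phi}{\phi}$. Writing $\N=\sum_k q_k \PPP^{(k)}$ and defining $M^{(k)}_i:=(\I\otimes\bra\phi)P^{(k)}_i(\I\otimes\ket\phi)$ produces a convex decomposition $\M^\star=\sum_k q_k \M^{(k)}$ of $\M^\star$ into POVMs on $\H$. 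Extremality of $\M^\star$ forces $\M^{(k)}=\M^\star$ whenever $q_k>0$, so $\M^\star$ itself admits a projective Naimark dilation inside $\H\otimes\H''$. Since the minimal Naimark dilation of a POVM equals the sum of the ranks of its effects, a POVM with $d^2$ linearly independent rank-one effects requires a dilation space of dimension at least $d^2$, contradicting $\dim(\H\otimes\H'')<d^2$.

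The main obstacle I foresee is twofold. First, the off-the-shelf Naimark theorem produces a PM on some abstract $d^2$-dimensional auxiliary space together with an isometric embedding of $\H$, whereas the statement requires the specific tensor-product structure $\H\otimes\H'$ with a \emph{prescribed} ancilla state $\ket\phi$ (and the same $\ket\phi$ must work uniformly for every $k$); arranging the isometry in the form $V=\I_\H\otimes\ket\phi$ and identifying the dilation space with $\H\otimes\H'$ consistently is routine but needs care. Second, the minimality step hinges on the existence in every dimension $d$ of an extremal rank-one POVM with $d^2$ linearly independent effects; since the general SIC existence problem is open, one has to supplement the argument with the weaker observation that extremal rank-one informationally complete POVMs exist for every finite $d$ (a perturbation/genericity argument suffices).
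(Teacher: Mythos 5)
Your proposal follows essentially the same route as the paper: decompose $\M$ into extremal rank-one components with at most $d^2$ effects, dilate each component via the standard Naimark theorem inside $\H\ot\H'$ (using that $\dim(\H\ot\H')=d^2$ suffices and that the embedding can be arranged as $\rho\mapsto\rho\ot\kb{\phi}{\phi}$ with the \emph{same} $\ket{\phi}$ for every component), and combine convexly using Fact~\ref{convHullOp}; your minimality argument --- pulling a convex projective decomposition of $\N$ back through $\I\ot\kb{\phi}{\phi}$, invoking extremality of a rank-one $d^2$-outcome POVM, and bounding the minimal dilation dimension by the sum of the ranks of the effects --- is also the paper's, stated somewhat more explicitly. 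One step needs repair: you decompose $\M$ into extremal elements of $\P(d,n)$ \emph{first} and only then ``refine if necessary'' their effects to rank one. Extremal POVMs need not have rank-one effects, and the bound you cite controls only the \emph{number} of effects of an extremal POVM, not the number of rank-one pieces after refinement; as written, a $d^2$-outcome extremal component could a priori refine into up to $d^3$ rank-one effects, which would no longer admit a Naimark dilation inside the $d^2$-dimensional space $\H\ot\H'$. This is fixed either by invoking the stronger consequence of the extremality criterion of~\cite{DAriano2005}, namely $\sum_i \mathrm{rank}(M_i)^2\leq d^2$ for extremal $\M$ (so the refinement has at most $d^2$ pieces), or, as the paper does, by reversing the order: first refine $\M$ by eigendecomposition into a rank-one $nd$-outcome POVM of which $\M$ is a coarse-graining, and only then decompose that rank-one POVM into extremal (hence rank-one) POVMs with at most $d^2$ outcomes. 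With that reordering or the rank bound, your argument is complete; your closing remark on replacing SIC-POVMs by generic extremal rank-one informationally complete POVMs is consistent with the paper's reliance on~\cite{DAriano2005} for their existence.
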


\begin{proof}
	Consider the eigendecomposition of the $i$-th effect of $\M$,
	\begin{equation}
	M_i = \sum_{j=1}^d \lambda^{(j)}_{i} \kb{\psi^{(j)}_i}{\psi^{(j)}_i} \ ,\ i=1,\ldots,n \ .
	\end{equation}
	To the POVM $\M$ we can associate the $nd$-output POVM $\M'$ whose effects are constructed from the eigendecomposition of the effects of $\M$, 
	\begin{equation}\label{eq:rank-oneEFF}
	M'_{i,j}= \lambda^{(j)}_{i} \kb{\psi^{(j)}_i}{\psi^{(j)}_i} \ , i=1,\ldots,n\ ,\ j=1,\ldots,d\ .
	\end{equation}
	The original POVM $\M$ can then be realised as a coarse-graining of $\M'$, $\M=\Q\left(\M\right)$, where $\Q$ is a coarse graining strategy specified by the stochastic matrix $q(i|k,j)=\delta_{i,k}$. 
	
	The POVM $\M'$ can be further decomposed into a convex mixture $\M'=\sum p_k \N_k$ of extremal POVMs $\N_k\in\P(d,dn;d^2)$ that have at most $d^2$ outputs~\cite{Chiribella2004} and whose effects are necessary rank-one \footnote{We have implicitly assumed that $n\geq d$. If this condition was not satisfied, the extremal POVMs $\N_k$ would have at most $nd\leq d^2$ outcomes. This however does not change our conclusions since we can always formally treat $\N_k$ them as $d^2$ output measurements}. Thus we have 
	\begin{equation}\label{eq:decompEXTR}
	\M=\Q\left(\sum_k p_k \N_k \right) =\sum_k p_k\Q\left( \N_k \right)\ .
	\end{equation}
	Let us consider a particular POVM $\N_k$ in the above decomposition. In what follows, for the sake of simplicity, we will drop the subscript $k$. Moreover we will assume that effects of $\N$ can be non-zero only for $i=1,\ldots,d^2$ (since we can always relabel the outputs we can assume that without the loss of generality). By linearity, to complete the proof it suffices now to show that for every $\N$ with at most $d^2$ nonzero effects, there exists a projective POVM $\PPP\in\PP(\H\ot\H',d^2)$ such that for every state $\rho$ on $\H$ and every outcome $i=1,\ldots,d^2$ 
	\begin{equation}\label{eq:IMPcond}
	\tr(\rho N_i)= \tr\left(\rho\ot \kb{\phi}{\phi} P_i\right) \ . 
	\end{equation} 
	Let us first observe that rank-one operators $N'_i \otimes \kb{\phi}{\phi}$ form a POVM on a $d$-dimensional Hilbert space $\tilde{\H}=\H\otimes\ket{\psi}$ with \emph{at most} $d^2$ nonzero effects. By the standard Naimark theorem~\cite{Peres2006}, we have that for a POVM with $d^2$ outputs with effects of rank 1 having an orthonormal basis $\{\ket{\psi_i}\}_{i=1}^{d^2}$ such that for all $i$
	\begin{equation}\label{eq:standNAIM}
	N'_i = \PP_{\tilde{\H}} \kb{\psi_i}{\psi_i} \PP_{\tilde{\H}} \ ,
	\end{equation}
	where $\PP_{\tilde{\H}}:\W\rightarrow\W$ is the orthogonal projection projection onto $\tilde{\H}$. Importantly, from the proof of the Naimark theorem given in~\cite{Peres2006}, we know that the construction of orthonormal basis does not depend on the specific structure of $\W$ or $\H'$ but \emph{only} on the number of outputs which in this case equals $d^2$ and a POVM $\N'$. In particular, for $\W=\H\otimes \H'$ we can find the orthonormal basis $\{\ket{\psi_i}\}_{i=1}^{d^2}$ such that for all $i$
	\begin{equation}\label{eq:PARTnaimark}
	N_i \ot \kb{\phi}{\phi}=\left(\I\ot \kb{\phi}{\phi}\right) \kb{\psi_i}{\psi_i} \left(\I\ot \kb{\phi}{\phi}\right)\ .
	\end{equation}
	Setting a projective measurement $P_i=\kb{\psi_i}{\psi_i}$ and using \eqref{eq:PARTnaimark}, we see that $\P$ satisfies equation \eqref{eq:IMPcond} and therefore the proof of which completes the proof of equation \eqref{eq:simulationTEH} is complete.
	
	We now show that $d$ is the minimal dimension of the ancilla $\H'$, necessary for the possibility of simulation of any POVM on $\H$ by some PM-simulable POVM on $\H\ot\H'$. This can be deduced from the existence~\cite{DAriano2005}, for any dimension $d$, of extremal rank-one $d^2$-output POVMs $\M^{\mathrm{ex}}$ on $\H$. Recall that for any dimension $D$ and any number of outputs $n$, the extremal elements of the set of PM-simulable measurements $\SP(D,n)$ are just projective measurements (this follows from Fact 1  in the main text). Consider now the linear mapping
	\begin{equation}
	F:\Herm(\H\ot\H')\ni X\mapsto \tr_{\H}\left[\left(\I\ot \kb{\phi}{\phi}\right) X \left(\I\ot\kb{\phi}{\phi}\right)\right] \in \Herm(\H) \ .
	\end{equation}
	This map can be extended to the linear mapping between spaces of POVMs $\tilde{F}:\P(\H\ot\H',n)\rightarrow \P(\H,n)$. By the extremality of $\M^{\mathrm{ex}}$, linearity of $\tilde{F}$, and the characterisation of extremal elements of $\SP(\H\ot\H',d^2)$, we see that to implement $\M^{\mathrm{ex}}$ via projective PM-simulable measurement on $\H\otimes\H'$, one needs projective measurements with at least $d^2$ elements. Thus $d$ is the minimal dimension of the ancilla Hilbert space. 
\end{proof}

\begin{rem}
	A careful inspection of the proof shows that if a POVM $\M$ has $n\leq d$ outputs, then it can be implemented by PM-simulable POVM on $\H\ot\H'$ where $\dim(\H')=n$. 
\end{rem}

Inspired by this proof, we present the explicit algorithm for finding for a given $\M\in\P(d,n)$ a PM-simulable POVM $\N$ on $\H\ot\H'$.
\begin{enumerate}
	\item Realise $\M\in\P(d,n)$ as a coarse-graining of a rank-one POVM $\M'\in\P(d,nd)$, $\M=\Q(\M')$ (see \eqref{eq:rank-oneEFF});
	\item Decompose $\M'$ onto a convex combination of extremal rank-one POVMs $\N_k$ with \emph{at most} $d^2$ nonzero effects, $\M'=\sum_k p_k \N_k$. This step can be implemented algorithmically (see the general ``method of perturbations'' discussed in~\cite{DAriano2005} and \cite{Sentis2013});
	\item Realise each $\N_k$ by implementing a projective POVM $\PPP_k$ on a system and $d$-dimensional ancilla $\H'$
	\begin{equation}\label{eq:IMPL}
	\tr(\rho \left[\N_k\right]_i)= \tr\left(\rho\ot \kb{\phi}{\phi} \left[\PPP_k\right]_i\right) \ , 
	\end{equation}
	for all outputs $i$ and all states $\rho$ on $\H$. We will focus now on the particular POVM $\N_k$. Also, for the sake of clarity we will skip the subscript $k$. Our construction is a slight modification of the standard proof of the Naimark theorem given for instance in \cite{Peres2006}.  First note that the condition \eqref{eq:IMPL} is equivalent to
	
	\begin{equation}\label{eq:equivalentNAIM}
	\ket{\psi_i} = \ket{v_i}\ket{\phi} + \ket{\psi_{i}^\perp} \ ,\ i=1,\ldots,d^2,
	\end{equation}
	where $\{ \ket{\psi_i}\}_{i=1}^{d^2}$ forms an orthonormal basis of $\H\ot\H'$, $\kb{v_i}{v_i}=N_i$, and vectors $\ket{\psi_{i}^\perp}$ are perpendicular to the subspace $\tilde{H}=\H\otimes\ket{\phi}$.  Consider now a fixed orthonormal basis $\{\ket{w_i}\}_{i=1}^{d^2}$ of $\H\otimes \H'$ having the property that the first $d$ basis vectors spans $\tilde{\H}$ (for instance the standard product basis $\{\ket{e_k}\ket{f_l}\}_{k,l=1}^{d}$, where $\ket{f_1}=\ket{\phi}$. will have this property). Writing equation \eqref{eq:equivalentNAIM} in this notation we get
	\begin{equation}\label{eq:NAIMbasis}
	\ket{\psi_i}=\sum_{j=1}^d a_{ij}\ket{w_j} + \sum_{j=d+1}^{d^2} a_{ij} \ket{w_j} \ ,\ i=1,\ldots,d^2 \ ,
	\end{equation}
	where 
	\begin{equation}\label{eq:NAIMcases}
	a_{ij}=\begin{cases}
	\bra{w_j}\left(\ket{v_i}\ket{\phi}\right)\ & \text{for}\ j=1,\ldots,d \\
	\bk{w_j}{\psi_{i}^\perp} & \text{for}\ j=d+1,\ldots,d^2 
	\end{cases}\ .
	\end{equation}
	
	The condition that operators $N_i$ form a POVM on $\H$ is then equivalent to orthogonality of the first $d$ columns of the matrix $\mathbb{A}_{ij}\defeq a_{ij}$ (in the sense of the standard inner product in $\C^{d^2}$). On the other hand, vectors $\ket{\psi_i}$ form an orthonormal basis of $\H\ot\H'$ if and only if the matrix $\mathbb{A}$ is unitary (which is equivalent to the orthogonality of its columns). Therefore, for every $d^2$-output POVM $\N$ we can construct the desired orthogonal basis $\{\ket{\psi_i}\}_{i=1}^{d^2}$ of $\H\ot\H'$ by complementing the submatrix  $\left(a_{ij}\right)_{i=1,j=1}^{i=d^2,j=d}$ to the unitary $d^2 \times d^2$ matrix $\mathbb{A}$. This can be realised in practice for instance by the application of the Gram-Schmidt process.

\end{enumerate}

\begin{rem}
	The main difference of our POVM simulation scheme and the "standard" methods of simulation of POVM  via projective measurements on a system extended by an ancilla \cite{Peres2006,Jozsa2003,covORTOmes}. In the standard approaches it is assumed that states $\ket{\psi_{i}^{\perp}}$ from \eqref{eq:equivalentNAIM}  have the structure $\ket{\psi_{i}^{\perp}}=\ket{s_{i}}\ket{f_i}$, where $\ket{f_i}\perp\ket{\phi}$. A careful inspection of the proof of Naimark theorem \cite{Peres2006} showed that this is structure is not necessary. If we had kept this structure we would be forced to assume (by the extermality arguments) that the dimension of the ancilla scales like $d^2$, which is more demanding then our construction.
\end{rem}

\section{Proofs of technical results}\label{app:tech}

In this part, we state and prove a number of technical results that were used in the main text or stated without a complete proof.

\subsection*{Partial characterisations of PM-simulability}

\begin{lem}\label{lem:suffTECH}
	Let $\M\in\P(d,n)$ be an $n$-output POVM on the Hilbert space $\mathcal{H}$ of dimension $d$. Let 
	$\lambda_i^{max}$ denote the maximal eigenvalue of the $i$-th effect $M_i$. If for some $k\in\left\{ 1,\ldots,n\right\}$ it holds that
	\begin{equation}
	\sum_{i\neq k}\lambda_i^{\max}\leq1\ ,\label{eq:conditionTECH}
	\end{equation}
	then
	$\M$ is PM-simulable. 
\end{lem}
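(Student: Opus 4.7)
The plan is to express $\M$ explicitly as a convex combination of PM-simulable POVMs, after which Fact~\ref{convHullOp} (together with the convexity of $\SP(d,n)$) will conclude. The key idea is to use outcome $k$ as a ``sink'' that absorbs all the residual weight, and to peel off the remaining effects one at a time.

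Concretely, I would first set $p_i \defeq \lambda_i^{\max}$ for $i\neq k$ and $p_k \defeq 1 - \sum_{i\neq k}\lambda_i^{\max}$, which is nonnegative by hypothesis~\eqref{eq:conditionTECH}, so $\{p_j\}_{j=1}^n$ is a probability distribution. For each $i\neq k$, I would define an $n$-outcome POVM $\N_i$ whose only nonzero effects lie at positions $i$ and $k$,
\begin{equation*}
(\N_i)_i \defeq \frac{M_i}{\lambda_i^{\max}}, \qquad (\N_i)_k \defeq \I - \frac{M_i}{\lambda_i^{\max}},
\end{equation*}
which is well-defined because $0\leq M_i/\lambda_i^{\max}\leq\I$. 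Letting $\N_k$ be the deterministic POVM with $(\N_k)_k=\I$ (and all other effects zero), a short bookkeeping calculation using $\sum_i M_i = \I$ verifies that $\M = \sum_j p_j \N_j$: the identity at outcome $i\neq k$ collapses to $p_i(\N_i)_i = M_i$, while the identity at outcome $k$ uses $\sum_{i\neq k}\lambda_i^{\max} = 1-p_k$ and $\sum_{i\neq k}M_i = \I-M_k$.

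It then remains to check that each $\N_j$ is itself PM-simulable. The POVM $\N_k$ is projective by construction. For $i\neq k$, the POVM $\N_i$ is, up to zero-padding, a two-outcome measurement on $\H$, and I would invoke the standard fact (see~\cite{Masanes2005,Fritz2010}, already used in the proof of Lemma~\ref{lem:main result}) that every two-outcome POVM on a finite-dimensional Hilbert space is PM-simulable. Concretely, the layer-cake identity applied to $E\defeq M_i/\lambda_i^{\max}$ writes $E=\int_0^1 P(s)\,ds$, where $P(s)$ is the spectral projector of $E$ onto eigenvalues $\geq s$, and exhibits $(E,\I-E)$ as a convex combination of two-outcome projective measurements $(P(s),\I-P(s))$; zero-padding these to $n$-outcome POVMs with nonzero effects only at positions $i$ and $k$ gives an element of $\PP(d,n)^{\mathrm{conv}}=\SP(d,n)$.

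The main obstacle is essentially none: once the weights are chosen correctly, the identity $\M=\sum_j p_j \N_j$ is routine algebra, and the PM-simulability of each $\N_j$ is either trivial ($j=k$) or follows from a previously cited result ($j\neq k$). The only creative step is bundling all deficits into the single outcome $k$, which is exactly what assumption~\eqref{eq:conditionTECH} makes possible.
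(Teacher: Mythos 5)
Your proposal is correct and coincides with the paper's own argument: the paper likewise assigns weight $\lambda_i^{\max}$ to the dichotomic POVM with effects $M_i/\lambda_i^{\max}$ and $\I - M_i/\lambda_i^{\max}$ (the complement routed to outcome $k$), puts the residual weight $\delta = 1-\sum_{i\neq k}\lambda_i^{\max}$ on the deterministic measurement $(\I,0,\ldots,0)$, and then cites~\cite{Masanes2005,Fritz2010} for the PM-simulability of two-outcome POVMs. Your explicit layer-cake justification of that last citation is a harmless elaboration, not a divergence.
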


\begin{proof}
	Without the loss of generality, we can assume that $k=1$. By the assumption \eqref{eq:conditionTECH}, we have
	\begin{equation}
	\sum_{i=2}^{N} \lambda^{\max}_{i} = 1-\delta, \end{equation}
	for some $\delta\geq 0$.
	Hence,
	\begin{align}
	M_{1} & = \I - \sum_{i=2}^{n}{M_i}\\
	&=\left(\sum_{i=2}^{n}\lambda^{\max}_{i}+\delta\right) \mathbb{I}-\sum_{i=2}^{N}M_{i} \\
	&= \delta\left(\I,0,\ldots,0\right)+\sum_{i=2}^{n}(\lambda^{\max}_{i}\I - M_i).
	\end{align}
	The definition of $\lambda^{\max}_{i}$ ensures $\lambda^{\max}_{i}\I - M_i \geq 0$, and therefore the dichotomic POVMs $\mathcal{Q}_i$, whose effects are
	\begin{equation}
	\left[\N_{i}\right]_{k}= \begin{cases}
	\I-M_k/\lambda^{\max}_{i} & \text{if}\,k=1\\
	M_k/\lambda^{\max}_{i} & \text{if}\,k=i\\
	0 & \text{otherwise}
	\end{cases}
	\end{equation}
	are well-defined. Then, we have the convex decomposition $\M=\delta\left(\mathbb{I},0,\ldots,0\right)+\sum_{i=2}^{n}\lambda^{\max}_{i} \N_{i}$. Since two-outcome POVMs are projective-simulable \cite{Masanes2005,Fritz2010}, this concludes the proof. 
\end{proof}

\begin{lem}[Lower bound for the critical visibility in arbitrary dimension]\label{lem:lowerboundTECH}
	For every $\M\in\P\left(d,n\right)$, its depolarised version $\Phi_{\frac{1}{d}} \left(\M\right)$ is PM simulable.
	Consequently, for any dimension $d$ we have $t(d)\geq 1/d$. 
\end{lem}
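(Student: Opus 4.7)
My plan is to follow the two-step strategy hinted at in the sketch: first reduce the problem to the case of rank-one POVMs, and then give an explicit mixing-plus-post-processing simulation of $\Phi_{1/d}(\M)$ for such POVMs.

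For the reduction, I would first verify that the depolarising map commutes with classical post-processing, i.e., $\Phi_t(\Q(\M)) = \Q(\Phi_t(\M))$ for every stochastic post-processing $\Q$ --- an immediate consequence of linearity and trace preservation of $\Q$. Combined with Fact~\ref{convHullOp}, which together with the observation that relabellings and coarse-grainings preserve projective measurements implies that $\SP(d,n)$ is closed under classical post-processing, this yields $t(\Q(\M)) \geq t(\M)$ for every POVM $\M$. Now any $\M \in \P(d,n)$ can be written as the coarse-graining of a rank-one POVM $\M' \in \P(d,nd)$ obtained by eigendecomposing each effect $M_i = \sum_j \lambda_i^{(j)} \kb{\psi_i^{(j)}}{\psi_i^{(j)}}$ and turning each eigenvector into a separate outcome (exactly as in~\eqref{eq:rank-oneEFF}), so it suffices to prove $t(\M') \geq 1/d$ for rank-one $\M'$.

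For the main step, let $\M = (\lambda_i \kb{\psi_i}{\psi_i})_{i=1}^n$ be a rank-one POVM on $\C^d$. The normalisation $\sum_i \lambda_i \kb{\psi_i}{\psi_i} = \I$ gives, after taking the trace, $\sum_i \lambda_i = d$, so $p_i \defeq \lambda_i/d$ is a probability distribution on $\{1,\dots,n\}$ and moreover $\sum_i p_i \kb{\psi_i}{\psi_i} = \I/d$. I then simulate $\Phi_{1/d}(\M)$ as follows: with probability $p_i$, perform the two-outcome projective measurement $\{\kb{\psi_i}{\psi_i},\,\I - \kb{\psi_i}{\psi_i}\}$; if the first outcome occurs declare result $i$, and if the second outcome occurs declare result $j$ with probability $\alpha_j \defeq \lambda_j/d$, independently of $i$. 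A short computation of the effect produced in slot $j$ gives
\begin{equation*}
\frac{\lambda_j}{d}\kb{\psi_j}{\psi_j} + \alpha_j \sum_i p_i\,(\I - \kb{\psi_i}{\psi_i}) = \frac{\lambda_j}{d}\kb{\psi_j}{\psi_j} + \alpha_j\,\frac{d-1}{d}\,\I ,
\end{equation*}
which, upon substituting $\alpha_j = \lambda_j/d$, matches $\Phi_{1/d}(M_j) = \tfrac{1}{d} M_j + \tfrac{d-1}{d^2}\tr(M_j)\,\I$ exactly.

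The only delicate point, and the place where the value $t=1/d$ enters crucially, is the identity $\sum_i p_i\,\kb{\psi_i}{\psi_i} = \I/d$: it is precisely this identity that allows the rank-one $\kb{\psi_i}{\psi_i}$ contributions from the ``reject'' branches of the dichotomic measurements to collapse into a pure multiple of $\I$, which is the only form the residual part of a depolarised rank-one effect can accommodate. Apart from recognising this, the proof is bookkeeping: the randomisation weights $p_i$ are forced by matching the coefficient of $\kb{\psi_j}{\psi_j}$, and the post-processing weights $\alpha_j$ are forced by requiring the $\I$-proportional residue to carry the correct $j$-dependent prefactor. I do not foresee further obstacles.
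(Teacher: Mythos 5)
Your proposal is correct and follows essentially the same route as the paper's proof: reduce to rank-one POVMs via coarse-graining (using that $\Phi_t$ commutes with post-processing and that $\SP(d,n)$ is closed under it), then simulate $\Phi_{1/d}$ of a rank-one POVM by randomising over the dichotomic projective measurements $\{\kb{\psi_i}{\psi_i},\I-\kb{\psi_i}{\psi_i}\}$ with weights $\lambda_i/d$ and reassigning the reject branch with the same weights. The effect computation and the role of the identity $\sum_i (\lambda_i/d)\kb{\psi_i}{\psi_i}=\I/d$ match the paper's argument exactly.
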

\begin{proof}[Final step of the proof of Lemma 1 from the main text] As remarked in the main text, we can restrict our attention to POVMs with rank-one effects. Let $\M\in\P(d,n)$ be such that $M_i = \alpha_i \Pi_i$, where $\sum_{i=1}^n \alpha_i = d$ and $\Pi_i$ are rank-one projectors.
	A natural candidate for a protocol to projective-simulate $\M$ consists of the following steps:
	\begin{enumerate}
		\item Choose $i\in\{1,...,n\}$ with probability $\alpha_i/d$;
		\item Perform the projective measurement $(\Pi_i, \I - \Pi_i)$;
		\item If the outcome corresponds to $\Pi_i$, output $i$;
		\item If the outcome corresponds to $\I-\Pi_i$, output any $j\in\{1,...,n\}$ with probability $\alpha_j/d$.
	\end{enumerate}
	Implementing the above protocol corresponds to performing a measurement $\N$ with effects given by
	\begin{eqnarray}
	N_i &=& \frac{\alpha_i}{d}\left[\Pi_i + (\I - \Pi_i)\frac{\alpha_i}{d} \right] + \left[\sum_{j \neq i}{\frac{\alpha_j}{d}(\I - \Pi_j)}\right]\frac{\alpha_i}{d} \nonumber \\ 
	&=& \frac{1}{d}\alpha_i\Pi_i + (1-\frac{1}{d})\alpha_i\frac{\I}{d}\\
	&=& \Phi_{\frac{1}{d}}(M_i). \nonumber
	\end{eqnarray}
	
\end{proof}

\begin{rem}
	For the particular case where $\M$ has rank-one effects with trace $1/d$ (such as $\M^{\mathrm{tetra}}$) and other symmetric informationally complete POVMs), outputting $j\neq i$ with uniform probability $1/(d^2-1)$ in step 4 of the protocol above yields the simulation of $\Phi_{\frac{d}{d^2-1}}\left(\M\right)$. 
\end{rem}

\subsection*{Optimal projective simulation of $\M^{\mathrm{tetra}}(\sqrt{2/3})$}

The symmetries of tetrahedral POVMs $\M^{\mathrm{tetra}}\in\P(2,4)$ allow us to find a projective simulation for its depolarised version $\M^{\mathrm{tetra}}(\sqrt{2/3})$, shown to be optimal by the SDP based on equation \eqref{eq:conv1}.

In Example~\ref{convhull vertesi} we saw that only six projective measurements $\Pi_{ij},\ ij = 12,...,34,$ suffice to carry out the simulation.
The uniform distribution of the Bloch vectors of the effects suggests that the PMs should have the direction of the bisectrices between each pair of effects, given by
\begin{eqnarray}
\Pi_{ij}^i &=& \frac12\left(\I + \sqrt{\frac{3}{2}}(\vec{n}_i - \vec{n}_j)\cdot \vec{\sigma}\right)\\
\Pi_{ij}^j &=& \frac12\left(\I + \sqrt{\frac{3}{2}}(\vec{n}_j - \vec{n}_i)\cdot \vec{\sigma}\right),
\end{eqnarray}
where $\sqrt{3/2}$ is a normalisation constant corresponding to the inverse of the length of the edge of a tetrahedron inscribed in the unit sphere. Explicit computation shows that we see that if we choose one of the projective measurements $\Pi_{ij}$ with uniform probability $1/6$ and output the corresponding outcome $i$ or $j$, we will be simulating the POVM $\Phi_{\sqrt{\frac{2}{3}}}\left( \M^{\mathrm{tetra}}\right)$.

\subsection*{Characterisations of PM-simulable and m-chotomic-simulable POVMs}

\begin{lem}[Characterisation of the convex hull of $m$-chotomic measurements for arbitrary dimension]\label{lem:mOUTchar}
	Let $\M\in\P(d,n)$ be $n$-outcome measurement on $d$ dimensional Hilbert space. Let $m\leq n$. $\M\in\P(d,n;m)^{\mathrm{conv}}$ if and only if 
	\begin{equation}\label{eq:m-simulability}
	\M=\sum_{X\in[n]_m}p_{X}\N_{X}\ 
	\end{equation}
	where 
	\begin{itemize}
		\item the sum in \eqref{eq:m-simulability} is over m-element subsets of $\left\{ 1,\ldots,n\right\}$;
		\item $\N_X\in\P(d,n;m)$ have non-zero effects only for outputs belonging to $X$, i.e., 
		\begin{equation}
		\left[\N_X\right]_i=0\ \ \text{for}\ i\notin X;
		\end{equation}
		\item $\{p_{X}\}_X$ form a probability distribution on $m$-element subsets of $\left\{ 1,\ldots,n\right\}$, 
		\begin{equation}
		p_{X}\geq0\,,\,\sum_{X\in[n]_m}p_{X}=1\ \ .
		\end{equation}
		
	\end{itemize} 
	
\end{lem}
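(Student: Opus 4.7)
The plan is to prove both implications by directly regrouping convex decompositions; no structural facts about POVMs beyond their definition are needed. For the reverse direction ($\Leftarrow$), I would simply observe that each $\N_X$ in the hypothesised decomposition has non-zero effects only at indices in $X$, hence at most $|X|=m$ non-zero effects, so $\N_X\in\P(d,n;m)$ by definition. A convex combination of such POVMs then lies in $\P(d,n;m)^{\mathrm{conv}}$.

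For the forward direction ($\Rightarrow$), the starting point would be an arbitrary convex decomposition $\M=\sum_k q_k \N_k$ with $\N_k\in\P(d,n;m)$, $q_k\geq 0$, $\sum_k q_k =1$. Since each $\N_k$ has at most $m$ non-zero effects, I would let $S_k\subseteq\{1,\ldots,n\}$ be its support and choose any $X_k\in[n]_m$ with $S_k\subseteq X_k$ (always possible since $|S_k|\leq m\leq n$). I would then group the decomposition by the value of $X_k$, defining
\begin{equation}
p_X \defeq \sum_{k:\,X_k=X} q_k, \qquad \N_X \defeq \frac{1}{p_X}\sum_{k:\,X_k=X} q_k \N_k
\end{equation}
for those $X\in[n]_m$ with $p_X>0$, and taking $\N_X$ to be any POVM with non-zero effects only inside $X$ when $p_X=0$ (the choice is irrelevant since the weight is zero). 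By construction $\sum_X p_X=1$ and $\M=\sum_X p_X \N_X$; each $\N_X$, being a convex combination of POVMs whose non-zero effects lie in $X$, inherits the same support property, as required.

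The only step requiring any care is the padding of $S_k$ to an $m$-element set $X_k$ when $|S_k|<m$; this is harmless since the added positions carry zero effects but is the one bookkeeping point worth stating explicitly. Beyond this, the lemma is essentially a tautological reindexing of the definition of $\P(d,n;m)^{\mathrm{conv}}$, and I do not anticipate any genuine obstacle.
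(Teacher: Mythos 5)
Your proof is correct and follows essentially the same route as the paper's: both directions are handled by regrouping an arbitrary convex decomposition according to which $m$-element subset supports each constituent POVM. Your explicit padding of supports $S_k$ with $|S_k|<m$ to a full $m$-element set $X_k$ is a small point of care that the paper's grouping (which assigns $\M_k$ to the set $X$ with $[\M_k]_i\neq 0$ \emph{iff} $i\in X$) glosses over, but the argument is otherwise identical.
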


\begin{proof}
	Let us first note that every measurement $\M\in\P(d,n)$ that can be expressed by \eqref{eq:m-simulability} belongs to the convex hull of $m$-outcome measurements, $\M\in\P(d,n;m)^{\mathrm{conv}}$. In order to show that every $\M\in \P(d,n;m)^{\mathrm{conv}}$ can be decomposed according to equation \eqref{eq:m-simulability} we use a convex  decomposition  
	\begin{equation}
	\M=\sum_k p_k \M_k , \
	\end{equation}
	where $\{p_k\}$ is a probability distribution and $\M_k \in \M\in \P(d,n;m)$.  Grouping the POVMs appearing in the above sum according to the subsets $X\in[n]_m$, where their effects can be nonzero, we get 
	\begin{equation}
	\M=\sum_{X\in[n]_m} \sum_{k:\left[\M_k\right]_i\neq 0\ \text{iff}\ i\in X} p_k \M_k \ .   
	\end{equation}
	Defining
	\begin{gather} 
	\tilde{p}_X\defeq \sum_{k:\left[\M_k\right]_i\neq 0\ \text{iff}\ i\in X} p_k \ , \\
	\tilde{p}_X \tilde{\M}_X \defeq  \sum_{k:\left[\M_k\right]_i\neq 0\ \text{iff}\ i\in X} p_k \M_k\ ,
	\end{gather}
	we finally get the desired decomposition 
	\begin{equation}
	\M=\sum_{X\in[n]_m}= \tilde{p}_X \tilde{\M}_X \ .
	\end{equation}

\end{proof}

We now move to the characterisation of PM-simulable measurements for qutrits. Let us first sate a useful mathematical result that will be used in the the proof of Lemma \ref{lem:mainresult2TEH} below.

\begin{lem}\label{lem:EXTqutrit}
	Let $\P_1 (3,3)$  be the set of qutrit 3-outcome measurements whose effects satisfy $\tr(M_i)=1$. The set of extremal points of this convex set consist solely of qutrit rank 1 projective measurements $\PPP\in\PP(3,3)$ i.e. measurements of the form $\PPP=\left(\kb{\psi_1}{\psi_1},\kb{\psi_2}{\psi_2},\kb{\psi_3}{\psi_3}\right)$, where $\{\ket{\psi_i}\}_{i=1}^3$ is an orthonormal basis of $\C^3$.
\end{lem}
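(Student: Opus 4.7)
The plan is to prove both inclusions separately. For the direction \emph{rank-$1$ PM $\Rightarrow$ extremal}, if $\PPP = (\kb{\psi_1}{\psi_1}, \kb{\psi_2}{\psi_2}, \kb{\psi_3}{\psi_3})$ with $\{\ket{\psi_i}\}_{i=1}^3$ an orthonormal basis, then any convex decomposition $\kb{\psi_i}{\psi_i} = p A_i + (1-p) B_i$ with $A_i, B_i \in \P_1(3,3)$ forces $\mathrm{supp}(A_i), \mathrm{supp}(B_i) \subseteq \mathrm{span}(\ket{\psi_i})$, and the unit-trace constraint immediately collapses $A_i = B_i = \kb{\psi_i}{\psi_i}$.

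For the converse I would invoke the standard perturbation criterion: $\M$ is extremal in $\P_1(3,3)$ if and only if the only Hermitian triple $(D_1,D_2,D_3)$ with $\mathrm{supp}(D_i)\subseteq \mathrm{supp}(M_i)$, $\tr(D_i)=0$, and $\sum_i D_i=0$ is the zero triple. The traceless Hermitian perturbations supported on $\mathrm{supp}(M_i)$ span a real vector space of dimension $r_i^2 - 1$, where $r_i = \mathrm{rank}(M_i)$, while $\sum_i D_i=0$ imposes $9$ Hermitian equations, one of which (the total-trace equation) is implied by the individual tracelessness. Hence a necessary condition for extremality is $\sum_i r_i^2 \leq 11$, and combined with $r_i \in \{1,2,3\}$ this leaves, modulo permutation, only the profiles $(1,1,1)$, $(2,1,1)$, $(2,2,1)$, $(3,1,1)$.

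I would then eliminate the non-target profiles. Whenever two of the effects are rank $1$, say $M_2 = \kb{\psi_2}{\psi_2}$ and $M_3 = \kb{\psi_3}{\psi_3}$, a short spectral computation on the span of $\ket{\psi_2}, \ket{\psi_3}$ shows that $M_2 + M_3$ has eigenvalues $1 \pm |\bk{\psi_2}{\psi_3}|$, so $M_1 = \I - M_2 - M_3 \geq 0$ forces $\bk{\psi_2}{\psi_3} = 0$; then $M_1$ itself is a rank-$1$ projector, ruling out both $(2,1,1)$ and $(3,1,1)$. For $(2,2,1)$, the rank-$1$ effect $M_3 = \kb{\psi_3}{\psi_3}$ pins $\mathrm{supp}(M_1) = \mathrm{supp}(M_2) = \ket{\psi_3}^\perp$, so any nonzero traceless Hermitian $D$ on that two-dimensional subspace produces a valid perturbation $(D, -D, 0)$ and contradicts extremality. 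Only the $(1,1,1)$ profile survives, and there the trace-$1$ and sum-to-$\I$ constraints immediately identify the $M_i$ as rank-$1$ projectors onto an orthonormal basis.

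The main bottleneck will be the spectral fact that two unit vectors whose rank-$1$ projectors sum to an operator bounded above by $\I$ must be orthogonal. It is the ingredient that converts the algebraic trace bookkeeping into the rigid geometric ONB structure, so I will want to write out the $2\times 2$ eigenvalue calculation cleanly rather than wave hands at it.
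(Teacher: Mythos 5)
Your proof is correct, and it rests on the same engine as the paper's: the perturbation criterion for extremality in $\P_1(3,3)$ (traceless Hermitian $D_i$ supported on $\mathrm{supp}(M_i)$ with $\sum_i D_i=0$), followed by a case analysis over rank profiles $(r_1,r_2,r_3)$. The organizational difference is real, though. The paper treats each profile separately, exhibiting an explicit perturbation for $(3,a,b)$, $(2,2,2)$ and $(2,2,1)$, and using only for $(2,2,2)$ the dimension count $3+3+3=9>8=\dim\Herm_0(\C^3)$. You promote that count to a uniform necessary condition $\sum_i r_i^2\leq 11$, which disposes of $(2,2,2)$, $(3,2,2)$, $(3,3,2)$, $(3,3,3)$ and $(3,2,1)$ in one stroke and leaves only four profiles to inspect by hand; the surviving ones are then killed by the same trace-one rigidity the paper uses (two unit-trace rank-one effects must be orthogonal projectors, via the eigenvalues $1\pm|\bk{\psi_2}{\psi_3}|$ of $M_2+M_3$, and the $(2,2,1)$ profile admits the perturbation $(D,-D,0)$). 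This is tidier than the paper's enumeration, which in the $(3,a,b)$ branch builds its perturbation from two eigenvectors of $M_2$ and therefore implicitly needs $\mathrm{rank}(M_2)\geq 2$ — a gap your counting argument sidesteps. You also supply the converse direction (every rank-one PM is extremal in $\P_1(3,3)$), which the lemma's statement requires but the paper's proof leaves implicit; your support-containment argument for it is the standard one and is fine. Do write out the $1\pm|\bk{\psi_2}{\psi_3}|$ computation as you intend, since it is the only place where positivity, rather than pure dimension counting, does the work.
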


\begin{proof}
	
	We apply the analogue of  ``method
	of perturbations'' \cite{DAriano2005} that allows to check whether a given POVM $\mathcal{M}\in\mathcal{P}\left(\mathcal{H},n\right)$
	is extremal or not. For completeness let us introduce the basics of the method of perturbations. 
	
	A POVM $\mathcal{M}\in\mathcal{P}\left(\mathcal{H},n\right)$
	\emph{is not} extremal if and only if there exits a vector of hermitian
	operators $\X=\left(X_{1},\ldots,X_{n}\right)\in\mathrm{Herm}\left(\mathcal{H}\right)^{\times n}$
	such that for every output $i$
	\begin{equation}
	\mathrm{supp}\left(X_{i}\right)\subset\mathrm{supp}\left(M_{i}\right)\,\,\label{eq:supports}
	\end{equation}
	where $\mathrm{supp}(X)$ denotes the support of the hermitian operator $X$,  and
	\begin{equation}\label{eq:linear dependance}
	\sum_{i=1}^{n}X_{i}=0\ ,
	\end{equation}
	Existence of the perturbation $\X$ allows to construct POVMs 
	\begin{equation}\label{eq:perturbation}
	\M_{I}=\M+\epsilon\X\,,\,\M_{II}=\M-\epsilon\X \ ,
	\end{equation}
	for sufficiently small $\epsilon>0$. From (\ref{eq:perturbation})
	it follows that $\M=\frac{1}{2}\M_{I}+\frac{1}{2}\M_{II}$.

	We adopt a method of perturbation to the set $\P_1(3,3)$. By the analogous construction as before we can show that $\M\in P_1(3,3)$ \emph{is not} extremal element in $P_1 (3,3)$ if and only if there exist a perturbation $\X=(X_1,X_2,X_3)$ such that for all outputs $i$: (I) $\mathrm{supp}\left(X_{i}\right)\subset\mathrm{supp}\left(M_{i}\right)$, (II) $\tr(X_i)=0$, and (III) $\sum_{i=1}^3 X_i=0$ (the condition (II) follows from the fact that the traces of effects have to be preserved upon the addition of the perturbation). W now introduce \emph{a class} of a POVM $\M\in P_1 (3,3)$ as a vector of integers $r(\M)=(r_1,r_2,r_3)$, where $r_i=\mathrm{rank}(M_i)$ (permutations in the order of the ranks correspond to relabelling). In what follows we show that whenever $r(\M)\neq(1,1,1)$ there always exist a perturbation $\X$ satisfying the above conditions (I-III):
	
	\begin{itemize}
		\item Cases $r(\M)=(a,b,0)$, where $a,b$- arbitrary, are not possible as $\tr(M_3)=1\neq0$;
		\item  Cases $r(\M)=(a,b,1)$, where $a,b> 2$ are impossible because if $M_3$ has rank one and $\tr(M_3)=1$ we necessary have $M_3=\kb{\psi}{\psi}$, for some pure state $\kb{\psi}{\psi}$. Therefore, ranks of the reaming effect have to be at most $2$;  
		\item The case $r(\M)=(2,1,1)$ is impossible for analogous reasons;
		\item If $r(\M)=(3,a,b)$, where $a,b\leq$ we can take  perturbation $\X$ of the form 
		\begin{equation}\label{eq:pertPROOF}
		\X = \left(\kb{\psi_1}{\psi_2}+\kb{\psi_2}{\psi_1}, - \kb{\psi_2}{\psi_1}-\kb{\psi_1}{\psi_2}, 0\right)\ ,
		\end{equation}
		where $\ket{\psi_1},\ket{\psi_2}$ are eigenvalues of $M_2$ corresponding to nonzero eigevectors;
		\item If $r(\M)=(2,2,2)$ we consider spaces $\mathrm{S}_i = \Herm_0 \left[\mathrm{supp}\left(M_i\right)\right]$, 
		where $\Herm_0(\mathcal{V})$ denotes the real vector space spece of traceless hermitian matrices supported on the Hilbert space $\mathcal{V}$. Since  $\mathrm{supp}(M_i)=2$ we have $\dim(\mathrm{S}_i) =3$ (these spaces are isomorphic to the real space spanned by the Pauli matrices in the qubit case). On the other hand $\Herm_0(\C^3)=8$ and and $\mathrm{S}_i \subset  \Herm_0(\C^3)$. Therefore there must be some linear dependency between elements of $\mathrm{S}_i$. In other words there exist operators $S_i \in \mathrm{S}_i$ such that 
		\begin{equation}
		S_1+S_2+S_3=0 \ .
		\end{equation}
		Therefore, $\X=(S_1,S_2,S_3)$ is a valid perturbation for $\M$.
		\item The case $r(\M)=(2,2,1)$ is easy to analyse as by  $\tr(M_3)=1$ we have $M_3=\kb{\psi}{\psi}$, for some pure state $\kb{\psi}{\psi}$. Then necessary operators $M_1, M_2$ commute and have the same supports. Consequently, perturbation given in \eqref{eq:pertPROOF} works.
		
	\end{itemize}

\end{proof}

\begin{lem}[Projective-simulable measurements for qutrits]
	\label{lem:mainresult2TEH}
	For $d=3$, projective measurements simulate 2-outcome measurements and 3-outcome measurements with trace-1 effects. Moreover we have
	\begin{equation}\label{eq:crucialfact2}
	\SP\left(3,n\right) = \left(\P\left(3,n;2\right) \cup \P_1\left(3,n;3\right)\right)^{\mathrm{conv}}\ \ . 
	\end{equation}
	where $\P_1\left(3,n;3\right) \subset \P\left(3,n\right)$ denotes the set of three output measurements on $\C^3$  with effects having unit trace.
\end{lem}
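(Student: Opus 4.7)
The plan is to prove both inclusions in~\eqref{eq:crucialfact2} separately, using Fact~\ref{convHullOp} together with a classification of qutrit projective measurements by the rank profile of their effects.

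For the inclusion $\supset$: since every $2$-outcome POVM is PM-simulable in arbitrary dimension~\cite{Masanes2005,Fritz2010}, we immediately have $\P(3,n;2)\subset\SP(3,n)$. For $\P_1(3,n;3)$ the task is to prove that every $3$-outcome qutrit POVM with trace-$1$ effects is a convex mixture of rank-$1$ projective measurements. I would first identify the extreme points of the compact convex set $\P_1(3,3)$ with rank-$1$ PMs via a method-of-perturbations argument adapted to the trace-preserving setting: $\M\in\P_1(3,3)$ fails to be extremal iff there exist Hermitian operators $(X_1,X_2,X_3)$ with $\mathrm{supp}(X_i)\subset\mathrm{supp}(M_i)$, $\tr(X_i)=0$, and $\sum_i X_i=0$. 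A case analysis over the rank profile $(r_1,r_2,r_3)$ of the effects rules out all profiles except $(1,1,1)$: profiles containing a $0$ contradict $\tr(M_i)=1$; profiles with a rank-$3$ effect admit an off-diagonal perturbation supported on another effect; the case $(2,2,2)$ is handled by the dimension count that three $3$-dimensional subspaces $\Herm_0(\mathrm{supp}(M_i))$ of the $8$-dimensional space $\Herm_0(\C^3)$ must be linearly dependent; and mixed cases like $(2,1,1)$ or $(3,3,1)$ are ruled out by direct positivity arguments using that rank-$1$ trace-$1$ effects are pure projectors. Since $\P_1(3,3)$ is compact and its extreme points are rank-$1$ PMs, every element decomposes into such PMs, hence lies in $\SP(3,n)$, and the convex hull inclusion follows.

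For the reverse inclusion $\subset$: by Fact~\ref{convHullOp} it suffices to show $\PP(3,n)\subset\P(3,n;2)\cup\P_1(3,n;3)$. The nonzero effects of any projective POVM on $\C^3$ are mutually orthogonal projectors summing to $\I$, so their ranks form an unordered partition of $3$. The only possibilities are $(3)$, $(2,1)$, and $(1,1,1)$: the first two yield projective POVMs with at most two nonzero effects, embedding into $\P(3,n;2)$; the last yields three rank-$1$ (hence trace-$1$) effects, embedding into $\P_1(3,n;3)$. Taking convex hulls on both sides of this set inclusion closes the argument.

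The main obstacle is the extremality characterisation inside $\P_1(3,3)$: the trace-$1$ constraint adds affine restrictions that the standard method of perturbations for general POVMs does not accommodate, forcing a rank-profile case analysis with the additional requirement $\tr(X_i)=0$. The profile $(2,2,2)$ is the most delicate step and crucially relies on the inequality $3\cdot 3>8=\dim\Herm_0(\C^3)$, which is specific to dimension three; its failure for $d\geq 4$ is precisely why the analogous statement breaks beyond qutrits, as witnessed by the counterexample $\T=\M_{12}+\M_{34}\in\P(4,4)$ discussed in the main text.
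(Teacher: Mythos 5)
Your proposal is correct and follows essentially the same route as the paper: the inclusion $\subset$ via Fact~\ref{convHullOp} and the rank-partition classification of qutrit PMs, and the inclusion $\supset$ via the extremality characterisation of $\P_1(3,3)$ by a trace-constrained method of perturbations with the same rank-profile case analysis (including the $3\cdot 3>8$ dimension count for the $(2,2,2)$ case), which is exactly the paper's Lemma~\ref{lem:EXTqutrit}. The only cosmetic difference is that you reduce from $n$ outcomes to $3$ by taking convex hulls of the set inclusion on projective measurements, whereas the paper first reduces to the $3$-outcome case by grouping effects over supports as in Lemma~\ref{lem:mOUTchar}.
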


\begin{proof} 
	
	Let us first note that we can resort to POVMs having at most  three outputs. Indeed if we first show that
	\begin{equation}\label{eq:crucialfact2THREE}
	\SP\left(3,3\right) = \left(\P\left(3,3;2\right) \cup \P_1\left(3,3\right)\right)^{\mathrm{conv}}\ \ ,
	\end{equation}
	then equation \eqref{eq:crucialfact2} will follow from it because:
	\begin{itemize}
		\item Every PM-simulable POVM $\M\in\SP(3,n)$ can be obtained as a convex combination of POVMs $\M_X$, where $\M_X$ is a PM-simulable measurement with \emph{at most} 3 nonzero effects corresponding to the outputs laying in $X\in[n]_3$;
		\item Similarly, every POVM $\M\in\left(\P\left(3,n;2\right) \cup \P_1\left(3,n;3\right)\right)^{\mathrm{conv}}$ can be written as a convex combination of POVMs $\N_X$, for $X\in[n]_3$. Each $\N_X$ can be expressed as a convex combination of elements form $\P\left(3,3;2\right) \cup \P_1\left(3,3\right)$ and has \emph{at most} 3 nonzero effects corresponding to outputs $i\in X$.
	\end{itemize}
	Therefore, from now on we will focus on showing \eqref{eq:crucialfact2THREE}. Since $\SP(3,3)=\PP(3,3)^{\mathrm{conv}}$ we see that  $\SP(3,3)\subset \left(\P\left(3,3;2\right) \cup \P_1\left(3,3\right) \right)^{\mathrm{conv}}$ because
	\begin{itemize}
		\item Two outcome projective measurements are as special case of two outcome measurements;
		\item Rank-one qutrit projective measurements belong to $\P_1 (3,3)$;
	\end{itemize}
	
	As two-outcome measurements $\P(3,3;2)$ are simulable by projective measurements, in order to show the reverse inclusion $\SP(3,3)\supset \left(\P\left(3,3;2\right) \cup \P_1\left(3,3\right) \right)^{\mathrm{conv}}$, it suffices to know that every $\M\in\P_1 (3,3)$ can be expressed as a convex combination of projective rank-one qutrit  measurements . However, this is precisely the content of Lemma \ref{lem:EXTqutrit} which we have proved before.

\end{proof}

\begin{lem}[Efficient characterisation of PM-simulable measurements for qutrits] \label{lem:mainresult2aux}
	Let $\M\in\P(3,n)$ be an $n$-outcome measurement on a qutrit Hilbert space. $\M\in\left(\P\left(3,n;2\right) \cup \P_1\left(3,n;3\right)\right)^{\mathrm{conv}}$ if and only if 
	
	\begin{equation}\label{eq:qutritChAR}
	\M=\sum_{Y\in[n]_2}p_{X}\N_{X} +\sum_{X\in[n]_3}p_{Y}\N_{Y}\ ,
	\end{equation}
	where 
	
	\begin{itemize}
		\item the sums in \eqref{eq:qutritChAR} are over 2- and 3-element subsets of $\left\{ 1,\ldots,n\right\}$;
		\item $\N_X\in\P(3,n;2)$ can have non-zero effects only for outputs belonging to $X$, i.e., 
		\begin{equation}
		\left[\N_X\right]_i=0\ \ \text{for}\ i\notin X;
		\end{equation}
		
		\item $\N_Y\in\P_1(3,n;3)$ have effects of unit trace and can have non-zero effects only for outputs belonging to $Y$, i.e., 
		\begin{equation}
		\left[\N_Y\right]_i=0\ \ \text{for}\ i\notin X;
		\end{equation}

		\item $\{p_{X}\}\cup \{p_{Y}\}$ form a probability distribution, i.e., $p_{X}\geq 0$, $p_{Y}\geq 0$ and 
		\begin{equation}
		\sum_{X\in[n]_2}p_{X} +\sum_{Y\in[n]_3}p_{Y}=1\ .
		\end{equation} 
		
	\end{itemize}
	
\end{lem}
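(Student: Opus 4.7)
The plan is to mimic the grouping argument used in Lemma~\ref{lem:mOUTchar}, adapted to the setting where the convex hull is taken over a disjoint union of two classes of POVMs, namely $\P(3,n;2)$ and $\P_1(3,n;3)$. The forward implication is essentially tautological: any expression of the form~\eqref{eq:qutritChAR} exhibits $\M$ as a convex combination of elements of $\P(3,n;2)\cup\P_1(3,n;3)$, so $\M$ automatically lies in their convex hull.

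For the reverse implication I would start with an arbitrary convex decomposition $\M=\sum_{k}p_k\M_k$ with $p_k\geq 0$, $\sum_k p_k=1$, and each $\M_k\in\P(3,n;2)\cup\P_1(3,n;3)$. The key step is to partition the index set of this decomposition according to (i) which of the two classes $\M_k$ belongs to, and (ii) the nonzero-support pattern of $\M_k$. If $\M_k\in\P_1(3,n;3)$, then all three of its effects have trace one and are in particular nonzero, so they determine a unique three-element subset $Y_k\in[n]_3$. If instead $\M_k\in\P(3,n;2)$, then at most two effects are nonzero and I choose (arbitrarily, when only one effect is nonzero) a two-element subset $X_k\in[n]_2$ containing the support.

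The third step is to collect contributions according to these labels by defining
\begin{equation}
p_X\N_X \defeq \sum_{k\,:\,\M_k\in\P(3,n;2),\,X_k=X} p_k\M_k, \qquad p_Y\N_Y \defeq \sum_{k\,:\,\M_k\in\P_1(3,n;3),\,Y_k=Y} p_k\M_k,
\end{equation}
for each $X\in[n]_2$ and $Y\in[n]_3$, where $p_X,p_Y\geq 0$ are the aggregate weights and $\N_X,\N_Y$ the normalized POVMs obtained by dividing out these weights (omitting any label whose aggregate weight vanishes). By construction $\N_X$ is a two-outcome POVM supported on $X$, $\N_Y$ is supported on $Y$, and the combined weights $\{p_X\}\cup\{p_Y\}$ form a probability distribution inherited from $\{p_k\}$.

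The only residual point to verify, and what I expect to be the main (though modest) obstacle, is that each $\N_Y$ genuinely lies in $\P_1(3,n;3)$, i.e.\ that its nonzero effects have unit trace. This reduces to the observation that the property ``each nonzero effect has trace one'' is preserved by convex combinations of POVMs sharing a common three-element support, which follows immediately from linearity of the trace. Once this is checked, no further idea beyond the one behind Lemma~\ref{lem:mOUTchar} is required.
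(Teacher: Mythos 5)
Your proof is correct and follows essentially the same route as the paper, which simply invokes the grouping argument of Lemma~\ref{lem:mOUTchar} applied separately to the two- and three-element support classes. Your explicit check that the unit-trace property of effects survives the (normalized) convex aggregation within each fixed three-element support is the only genuinely new point, and it holds by linearity of the trace exactly as you say.
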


\begin{proof}
	The proof follows form Lemma \ref{lem:mainresult2TEH} and is completely analogous to the proof of Lemma \ref{lem:mOUTchar}. The only difference is that we have to consider two- and three-element subsets that support different types of POVMs.
\end{proof}


A slight modification of the above results yields SDPs for the computation of the critical visibility $t(\M)$ in the cases where $\M\in\P(d,n)$, for $d=2,3$.

\subsection*{Counterexamples for the generalisations of PM-simulation constructions for higher dimensions}

\begin{fact}\label{fact:doutp}
	For dimension $d>2$ there exist $d$-outcome POVMs $\mathcal{M}\in\mathcal{P}\left(d,d\right)$
	that are not PM-simulable. \end{fact}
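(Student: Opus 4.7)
The plan is to leverage Fact~1, which says $\SP(d,d)=\PP(d,d)^{\mathrm{conv}}$. It is then enough to exhibit, for each $d>2$, a $d$-outcome POVM $\M\in\P(d,d)$ that is extremal in $\P(d,d)$ and not projective: any convex decomposition of such an $\M$ into projective measurements would, by extremality, have to collapse to a single projective term and thereby force $\M$ itself to be projective, contradicting the hypothesis.

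For $d=3$ the modified trine $\M=(T_1,T_2,T_3+\kb{2}{2})$, already shown to be extremal in the main text, is such an example. For general $d>3$ I would embed this construction into $\C^d$ by setting $M_i=T_i$ for $i=1,2$, $M_3=T_3+\kb{2}{2}$, and $M_i=\kb{i-1}{i-1}$ for $i=4,\ldots,d$; the effects sum to $\I_d$ by construction, and $M_1=\tfrac{2}{3}\kb{\psi_1}{\psi_1}$ has an eigenvalue $2/3\notin\{0,1\}$, so $\M$ is manifestly non-projective.

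The main work is verifying that this $\M$ is extremal. Given $\M=p\N+(1-p)\Q$ with $\N,\Q\in\P(d,d)$ and $p\in(0,1)$, the inequality $pN_i\leq M_i$ implies $\mathrm{supp}(N_i)\subseteq\mathrm{supp}(M_i)$ (and likewise for $Q_i$). For $i\geq 4$ the effect $M_i$ has rank one, so $N_i=a_i\kb{i-1}{i-1}$ and $Q_i=b_i\kb{i-1}{i-1}$ for some $a_i,b_i\geq 0$. The supports of $N_1,N_2,N_3$ lie inside $\mathrm{span}\{\ket{0},\ket{1},\ket{2}\}$, so evaluating $\sum_i N_i=\I_d$ on $\ket{i-1}$ for $i\geq 4$ forces $a_i=b_i=1$, i.e.\ $N_i=Q_i=M_i$ for these outcomes. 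The triples $(N_1,N_2,N_3)$ and $(Q_1,Q_2,Q_3)$ then form qutrit POVMs on $\mathrm{span}\{\ket{0},\ket{1},\ket{2}\}$ that convex-decompose the modified trine, and extremality of the latter closes the argument. The main obstacle is really just the support-containment bookkeeping on the rank-one diagonal effects added in the embedding; once this is settled, the claim reduces cleanly to the already-established extremality of the modified trine.
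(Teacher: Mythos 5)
Your proposal is correct and follows essentially the same route as the paper: the modified trine as the $d=3$ example, extended to $d>3$ by appending the rank-one projectors $\kb{3}{3},\ldots,\kb{d-1}{d-1}$ as extra outcomes, with non-PM-simulability deduced from extremality plus non-projectivity. The only difference is that you spell out the support-containment argument for extremality of the embedded POVM, which the paper merely asserts.
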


\begin{proof}
	First consider the case $d=3$ and take the modified trine measurement~\cite{Jozsa2003} on $\C^3$ 
	\begin{equation}
	\M\defeq\left(\frac{2}{3}\kb{\psi_1}{\psi_1}, \frac{2}{3}\kb{\psi_2}{\psi_2},\frac{2}{3}\kb{\psi_3}{\psi_3} +\kb{2}{2}\right) \ ,
	\end{equation}
	where $\ket{\psi_j}=\cos(\pi j /3 ) \ket{0}+ \sin(\pi j /3 ) \ket{1}$. $\M$ turns out to by extremal in $\P(d,d)$ and thus cannot be PM-simulable. The extremality of $\M$ follows from the the extremality of the trine POVM on $\C^2$
	\begin{equation}
	\M_{\mathrm{trine}} = \left(\frac{2}{3}\kb{\psi_1}{\psi_1}, \frac{2}{3}\kb{\psi_2}{\psi_2},\frac{2}{3}\kb{\psi_3}{\psi_3}\right)\ .
	\end{equation}
	Indeed if $\M$ was not extremal then $\M_{\mathrm{trine}}$ cannot be extremal since 
	\begin{equation}
	\M= \PP\left(\M_{\mathrm{trine}}\right) \ ,
	\end{equation}
	where $\PP$ is the orthogonal projector onto two-dimensional subspace spanned by $\ket{0}$ and $\ket{1}$.
	
	Extremal non-projective POVMs measurements in $\P(d,d)$ can be constructed from $\M$ by supplementing it
	with the complete projective measurement in the orthogonal complement of $\mathrm{span}_\C \lbrace\ket{0},\ket{1},\ket{2}\rbrace$
	distributed among the remaining $d-3$ outputs. 
	
\end{proof}

\begin{lem}\label{fact:tracone}
	For dimension $d>3$, there exist $d$ output POVMs $\M\in\P(d,d)$ with effects having unit trace that are not PM-simulable.
\end{lem}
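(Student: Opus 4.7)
My plan is to construct a counterexample for $d=4$ from two orthogonal copies of the tetrahedral POVM, and then lift it to arbitrary $d>4$ by adjoining deterministic projective outputs. For $d=4$ I would split $\C^4 = \H_{12}\oplus\H_{34}$ into orthogonal two-dimensional subspaces, take copies $\M^{(12)},\M^{(34)}$ of $\M^{\mathrm{tetra}}$ supported in each, and define $\T\in\P(4,4)$ by $T_i = M^{(12)}_i\oplus M^{(34)}_i$. Each $T_i$ has eigenvalues $\{\tfrac12,\tfrac12,0,0\}$, hence $\tr(T_i)=1$, and $\sum_i T_i = \I_4$. In particular $\T$ has trace-one effects and is not projective, so by Fact~1 it is enough to prove that $\T$ is extremal in $\P(4,4)$.

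The extremality is where the real content sits. Given any decomposition $\T=p\T'+(1-p)\T''$ with $p\in(0,1)$, I would write each $T'_i$ in $2\times 2$ block form with respect to $\H_{12}\oplus\H_{34}$ as $T'_i=\begin{pmatrix} A_i & C_i \\ C_i^\dagger & B_i \end{pmatrix}$. The diagonal blocks $\{A_i\}$ form a POVM on $\H_{12}$, and the decomposition restricted to these blocks yields a convex decomposition of $\M^{\mathrm{tetra}}$; by the extremality of the tetrahedral POVM this forces $A_i=M^{(12)}_i$, and similarly $B_i=M^{(34)}_i$. Positivity of $T'_i$ together with the rank-one form of the diagonal blocks then constrains $C_i = \lambda_i \ket{n^{(12)}_i}\!\bra{n^{(34)}_i}$ for some $\lambda_i\in\C$ with $|\lambda_i|\leq \tfrac12$. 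The normalisation $\sum_i T'_i = \I$ gives the off-diagonal constraint $\sum_i C_i=0$, and linear independence of the four operators $\{\ket{n^{(12)}_i}\!\bra{n^{(34)}_i}\}_{i=1}^4$ in $\mathcal{L}(\H_{34},\H_{12})$---which follows from the SIC structure of the tetrahedron via a short Gram-matrix computation giving eigenvalues $2,\tfrac23,\tfrac23,\tfrac23$---forces every $\lambda_i=0$. Hence $\T'=\T$ and $\T$ is extremal.

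For $d>4$, I would embed the construction in $\C^d=\C^4\oplus\mathrm{span}\{\ket{e_5},\ldots,\ket{e_d}\}$ and define $\M\in\P(d,d)$ by $M_i = T_i\oplus 0$ for $i\leq 4$ and $M_i=\kb{e_i}{e_i}$ for $i\geq 5$; every effect still has trace one and $\sum_i M_i=\I_d$. If $\M=\sum_k p_k\PPP_k$ were a projective decomposition, rank-oneness of $M_i$ for $i\geq 5$ would force $[\PPP_k]_i\in\{0,\kb{e_i}{e_i}\}$, and matching the weighted sums would pin down $[\PPP_k]_i=\kb{e_i}{e_i}$ for every $k$ with $p_k>0$ and every $i\geq 5$. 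The remaining effects $\{[\PPP_k]_i\}_{i=1}^4$ would then form a projective POVM supported on $\C^4$, expressing $\T$ as a convex combination of projective POVMs on $\C^4$ and contradicting the $d=4$ case. The main obstacle is the off-diagonal control in the extremality argument: pinning down the diagonal blocks is immediate from the extremality of $\M^{\mathrm{tetra}}$, but cancellation of the $C_i$'s requires the linear-independence input from the tetrahedral geometry.
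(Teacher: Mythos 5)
Your construction is exactly the paper's: two copies of $\M^{\mathrm{tetra}}$ supported on orthogonal two-dimensional subspaces of $\C^4$, summed effect-by-effect to give a trace-one, non-projective $\T\in\P(4,4)$, then lifted to $d>4$ by adjoining projective outputs on the orthogonal complement. Where you differ is in the extremality argument, and your version is in fact more complete. The paper disposes of extremality in one line: a decomposition of $\T$ would project (via $\PP_{\W}(\cdot)\,$, $\PP_{\W'}(\cdot)$) to a decomposition of each tetrahedron, contradicting their extremality. As you correctly identify, this only pins down the diagonal blocks $A_i, B_i$ of the competing effects; it says nothing about the off-diagonal blocks $C_i$, which a priori give additional directions in which $\T$ could fail to be extremal. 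Your closing of that gap is sound: positivity with rank-one diagonal blocks forces $C_i=\lambda_i\kb{n^{(12)}_i}{n^{(34)}_i}$, normalisation gives $\sum_i C_i=0$, and the Gram matrix of the four operators $\kb{n^{(12)}_i}{n^{(34)}_i}$ (entries $1$ on the diagonal and $|\bk{n_i}{n_j}|^2=1/3$ off it, eigenvalues $2,\tfrac23,\tfrac23,\tfrac23$) is nonsingular, so all $\lambda_i=0$. Your lifting argument for $d>4$ is also slightly different in flavour --- you argue directly that a projective decomposition of the lifted POVM would induce one of $\T$, rather than asserting extremality of the lifted POVM --- but both routes work. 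In short: same construction, but your write-up supplies the off-diagonal control that the paper's proof leaves implicit.
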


\begin{proof}
	First consider the case $d=4$ and take two qubit tetrahedron measurements
	\begin{equation}
	\T=\left(T_{1},T_{2},T_{3},T_{4}\right)\,,\,\T'=\left(T'_{1},T'_{2},T'_{3},T'_{4}\right)\,,\label{eq:two tetrachedra}
	\end{equation}
	each supported on two mutually orthogonal two-dimensional subspaces of $\mathbb{C}^{4}$
	(denoted by $\W$ and $\W'$ respectively).
	Using $\T$ and $\T'$ we can define a measurement
	on $\mathbb{C}^{4}$,
	\begin{equation}
	\M\defeq\left(T_{1}+T'_{1},T_{2}+T'_{2},T_{3}+T'_{3},T_{4}+T'_{4}\right)\,.\label{eq:construction}
	\end{equation}
	Clearly, $\M$ is not a projective measurement and $\tr(M_i)=1$. Moreover, it is easy to see that $\M$ is extremal. This follows form the
	extremality of measurements $\T$ and $\T'$.
	If $\hat{\T}$ was not extremal, then POVMs $\T,\T'$
	would not be extremal as well since 
	\begin{equation}\label{eq:recovery}
	\T=\PP_{\W}\left(\M\right) ,\ \T'=\PP_{\W'}\left(\M\right)\ ,
	\end{equation}
	where $\PP_{\W}$ and $\PP_{\W'}$
	denote orthogonal projectors onto $\W$ and $\W'$
	respectively. 
	
	Extremal non-projective POVMs in $\mathcal{P}\left(\mathcal{H},d\right)$ having effects of unit trace
	for $d>4$ can be constructed from $\M$ by supplementing it
	with the projective measurement in the orthogonal complement of $\W \oplus \W'$
	distributed among the remaining $d-4$ outputs. 
\end{proof}

\section{SDP programs}\label{app:SDP}

In this part we provide explicit semi-definite programs for deciding PM-simulability for qubits and qutrits, as well as for the simulation of POVMs on the Hilbert space of dimension $d$ by $m$-chotomic measurements. Moreover, we give the modified versions of these programs that allow to compute, for a given POVM $\M$, the critical visibility $t(\M)$ for PM- or $m$-chotomic simulability.

Using Lemma~\ref{lem:mOUTchar}, it is possible to characterise the convex hull of $m-$outcome measurements. 

\begin{lem}[SDP for the convex hull of $m$-outcome measurements]\label{lem:mOUTcharSDP}
	Let $\M\in\P(d,n)$ be an $n$-outcome measurement on a $d$-dimensional Hilbert space. Let $m\leq n$. The question whether $\M\in\P(d,n;m)^{\mathrm{conv}}$ can be solved by the following SDP feasibility problem
	\begin{align}
	& \text{Given} && \M = (M_1,\ldots,M_n)\nonumber \\
	& \text{Check if there exist} &&\left\{\N_X\right\}_{X\in[n]_m}\ ,\ \left\{p_X\right\}_{X\in[n]_m}\ , \nonumber \\
	& \text{Such that} && \M=\sum_{X\in[n]_m} \N_X\ , \nonumber \\
	& && [\N_X]_i \geq 0\ , i=1,\ldots,n \ , \nonumber \\
	& && [\N_X]_i = 0\ \ \text{for}\ i\notin{X} \ , \\
	& && \sum_{i=1}^n [\N_X]_i= p_X \I\ , \nonumber \\
	& && p_X \geq 0\ , \sum_{X\in[n]_m} p_X=1\ . \nonumber
	\end{align}
\end{lem}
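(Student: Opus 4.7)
The plan is to show that the SDP feasibility problem is simply a direct encoding of the characterisation of $\P(d,n;m)^{\mathrm{conv}}$ given by Lemma~\ref{lem:mOUTchar}, with one reformulation trick to keep all constraints linear in the SDP variables.

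First, I would recall that by Lemma~\ref{lem:mOUTchar}, $\M \in \P(d,n;m)^{\mathrm{conv}}$ iff there exist an assignment of probabilities $\{p_X\}_{X \in [n]_m}$ and POVMs $\{\N_X\}_{X \in [n]_m}$ with $\N_X \in \P(d,n;m)$ having nonzero effects only on outputs in $X$, such that $\M = \sum_X p_X \N_X$. A naive translation of this into an SDP would introduce the bilinear term $p_X \N_X$, which is not semi-definite representable. The fix is to absorb the probability into the POVM itself: define the unnormalised operator tuple $\tilde{\N}_X \defeq p_X \N_X$. Then the conditions $[\N_X]_i \geq 0$ and $\sum_i [\N_X]_i = \I$ become $[\tilde{\N}_X]_i \geq 0$ and $\sum_i [\tilde{\N}_X]_i = p_X \I$, both of which are linear in the variables. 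The support condition $[\N_X]_i = 0$ for $i \notin X$ remains linear, and the normalisation $\sum_X p_X = 1$ is a single linear equality. These are exactly the constraints appearing in the lemma (with $\N_X$ in the SDP statement playing the role of $\tilde{\N}_X$).

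Second, I would verify the two directions. For the forward direction, suppose $\M \in \P(d,n;m)^{\mathrm{conv}}$. By Lemma~\ref{lem:mOUTchar} take the decomposition $\M = \sum_X p_X \N_X$ and set the SDP variables $\tilde{\N}_X = p_X \N_X$; all constraints are then satisfied by construction. For the converse, suppose the SDP is feasible with variables $\{\tilde{\N}_X\}$ and $\{p_X\}$. For each $X$ with $p_X > 0$, define $\N_X \defeq \tilde{\N}_X / p_X$; this is a POVM with effects supported only on $X$ (by the positivity and support constraints together with $\sum_i [\tilde{\N}_X]_i = p_X \I$). For $X$ with $p_X = 0$, the constraints force $\tilde{\N}_X = 0$, so these terms drop out of the sum. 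Consequently $\M = \sum_X \tilde{\N}_X = \sum_{X: p_X > 0} p_X \N_X$ is precisely the decomposition required by Lemma~\ref{lem:mOUTchar}, hence $\M \in \P(d,n;m)^{\mathrm{conv}}$.

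Finally, I would note that all constraints (operator positivity, linear matrix equalities in the effects, scalar linear (in)equalities in the $p_X$) are standard SDP primitives, so the problem fits into the SDP framework; the number of variables is polynomial in $d$ and $\binom{n}{m}$. The main point to emphasise, rather than an obstacle, is the variable change $\tilde{\N}_X = p_X \N_X$: without it one has a bilinear program. No new mathematical content is needed beyond Lemma~\ref{lem:mOUTchar} itself.
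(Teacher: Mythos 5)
Your proposal is correct and follows essentially the same route as the paper's own (very terse) proof: both hinge on the change of variables $\tilde{\N}_X = p_X \N_X$ to linearise the bilinear decomposition of Lemma~\ref{lem:mOUTchar} into SDP constraints. Your write-up is in fact more complete, as it explicitly checks both directions of the equivalence and handles the $p_X=0$ edge case, which the paper leaves implicit.
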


\begin{proof}
	This result follows directly from Lemma \ref{lem:mOUTchar}. Having a decomposition \eqref{eq:m-simulability} we can define auxiliary "probability variables" $p_X$ and "rescaled"  effects $\tilde{\N}_X$ defined by the condition $\tilde{\N}_X \defeq p_X \N_X$. Under this notation equation \eqref{eq:m-simulability} becomes exactly equivalent to \eqref{eq:m-simulability} for $\N_X \equiv \tilde{\N}_X$.
\end{proof}

Below we present an explicit example of the SDP description of $\P\left(2,n;2\right)^{\mathrm{conv}}$ for $n=4$ outputs. This case is especially interesting since for qubits the extremal POVMs have at most $4$ outcomes.

\begin{exa}
	\label{convhull vertesi} A four outcome qubit POVM $\M \in\P(2,4)$ is PM-simulable
	if and only if 
	\begin{gather}\
	M_{1}=N_{12}^{+}+N_{13}^{+}+N_{14}^{+}\,,\nonumber \\
	M_{2}=N_{12}^{-}+N_{23}^{+}+N_{24}^{+}\,,\nonumber \\
	M_{3}=N_{13}^{-}+N_{23}^{-}+N_{34}^{+}\,,\label{eq:conv1}\\
	M_{4}=N_{14}^{-}+N_{24}^{-}+N_{34}^{-}\,,\nonumber 
	\end{gather}
	where Hermitian operators $N_{ij}^{\pm}$ satisfy $N_{ij}^{\pm}\geq0$
	and $N_{ij}^{+}+N_{ij}^{-}=p_{ij}\mathbb{I}$ for $i<j$ , $i,j=1,2,3,4$, where $p_{ij}\geq0$ and $\sum_{i<j}p_{ij}=1$.
\end{exa}

The conditions satisfied by variables $N_{ij}^{\pm}$, $p_{ij}$ are of the form of semi-definite
constrains. Notice that the equation \eqref{eq:conv1} actually presents a decomposition of a POVM $\M$ onto six dichotomic measurements. Since two outcome measurement can be simulated by a single projective measurement, we see that arbitrary four-outcome qubit PM-simulable $\M$ can be simulated by at most six projective measurements. Dichotomic measurements with $N'^{\pm}_{ij}=N^\pm_{ij}/p_{ij}$ can be extracted directly form the solution of the SDP programme. The SDP \eqref{eq:conv1} can be modified to calculate the critical visibility $t(\M)$ for a fixed 4-outcome qubit POVM. The conditions satisfied by variables $N_{ij}^{\pm}$, $p_{ij}$ are of the form of semi-definite
constrains. Notice that the equation \eqref{eq:conv1} actually presents a decomposition of a POVM $\M$ onto six dichotomic measurements. 

In analogy to the case of PM-simulability we can define the notion of simulability by $m$-outcome measurements. We analogously define the notion of critical visibility for simulation in terms of $m$-chotomic measurements
\begin{equation}\label{eq:critVisM}
t^{(m)}(\M)\defeq \mathrm{max}\SET{t}{\Phi_t (\M) \in \P(d,n;m)^{\mathrm{conv}}}\ .
\end{equation}
A simple modification of the above result allows us to construct SDP for computing, for the given POVM $\M$, the critical visibility $t^{(m)}(\M)$.

\begin{lem}[SDP for critical visibility for $m$-outcome simulability]\label{lem:mOUTvisSDP}
	Let $\M\in\P(d,n)$ be $n$-outcome measurement on $d$ dimensional Hilbert space. Let $m\leq n$. The critical visibility $t^{(m)}(\M)$ can be computed via the following SPD programme
	\begin{align}
	& \text{Given} && \M = \left(M_1,\ldots,M_n\right)\nonumber \\
	& \text{Maximise} && t\ \nonumber \\
	& \text{Such that} && t\M+(1-t)\left(\tr(M_1)\frac{\I}{d},\ldots,\tr(M_n)\frac{\I}{d}\right)=\sum_{X\in[n]_m} \N_X\ , \nonumber \\
	& && \left\{\N_X\right\}_{X\in[n]_m}\ ,\ \left\{p_X\right\}_{X\in[n]_m}\ , \nonumber \\ 
	& && [\N_X]_i \geq 0\ , i=1,\ldots,n \ , \nonumber \\
	& && [\N_X]_i = 0\ \ \text{for}\ i\notin{X} \ , \\
	& && \sum_{i=1}^n [\N_X]_i= p_X \I\ , \nonumber \\
	& && p_X \geq 0\ , \sum_{X\in[n]_m} p_X=1\ , \nonumber \\
	& && 1\geq t \geq 0\ . \nonumber 
	\end{align}
\end{lem}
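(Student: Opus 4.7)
The approach is to reduce this statement directly to Lemma~\ref{lem:mOUTcharSDP} by exploiting the linearity of the depolarising map $\Phi_t$ in $t$. First I would observe that, by the definition \eqref{eq:critVisM}, the quantity $t^{(m)}(\M)$ is the supremum of $t\in[0,1]$ such that $\Phi_t(\M)\in\P(d,n;m)^{\mathrm{conv}}$. Invoking Lemma~\ref{lem:mOUTcharSDP} with the POVM $\Phi_t(\M)$ in place of $\M$, this membership is equivalent to the feasibility of the SDP constraints from that lemma, except that on the left-hand side of the decomposition equation one has $[\Phi_t(\M)]_i = tM_i + (1-t)\tr(M_i)\I/d$ rather than $M_i$. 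Since this expression is affine in $t$, and since the remaining constraints (positivity, support, normalisation, and the probability constraints on $p_X$) are independent of $t$, collecting them and adjoining the scalar bounds $0\leq t\leq 1$ produces precisely the programme in the statement. The objective $t$ being linear, the programme is indeed an SDP.

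Next I would check that the maximum is attained, so that the SDP actually computes $t^{(m)}(\M)$ rather than just its supremum. The feasible set is non-empty: at $t=0$ one has $\Phi_0(\M)=(\tr(M_1)\I/d,\ldots,\tr(M_n)\I/d)$, which is a convex combination of deterministic measurements and therefore lies in $\P(d,n;m)^{\mathrm{conv}}$ for every $m\geq 1$. The set is also closed, since all constraints are weak inequalities or equalities between continuous functions, and bounded, since $t,p_X\in[0,1]$ and the relation $\sum_i [\N_X]_i = p_X \I$ combined with $[\N_X]_i\geq 0$ forces $0\leq [\N_X]_i \leq p_X \I \leq \I$. Compactness together with continuity of the objective yields attainment of the maximum by Weierstrass' theorem, so the SDP returns exactly $t^{(m)}(\M)$.

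There is essentially no real obstacle here, only a convention to track carefully: the variables $\N_X$ in the SDP are the \emph{unnormalised} effects $p_X\N_X$ of Lemma~\ref{lem:mOUTchar} (this is what makes the normalisation read $\sum_i [\N_X]_i = p_X\I$ rather than $\sum_i [\N_X]_i = \I$, and it is what keeps the programme jointly linear in the $\N_X$ and the $p_X$). Once this bookkeeping is consistent with the proof of Lemma~\ref{lem:mOUTcharSDP}, the derivation is a direct substitution and the result follows.
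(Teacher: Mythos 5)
Your argument is correct and follows essentially the same route as the paper, which simply asserts that the programme is ``a simple modification'' of the feasibility SDP in Lemma~\ref{lem:mOUTcharSDP}: you substitute $\Phi_t(\M)$ into that characterisation, note the affine dependence on $t$, and keep the rescaled-variable convention $\tilde{\N}_X = p_X\N_X$. Your additional check that the maximum is attained (non-emptiness at $t=0$ plus compactness of the feasible set) is a detail the paper leaves implicit, but it does not change the approach.
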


\begin{rem}
	By the virtue  Lemma 2 from the main text, for $d=m=2$ the above semi-definite program computes the critical visibility for qubit POVMs i.e., for $d=2$ we have $t(\M)=t^{(2)}(\M)$. A qubit POVM $\M\in\P(2,n)$ is projective-simulable if and only if $t^{(2)}(\M)=1$.
\end{rem}

Using analogous methods and Lemma 3 form the main text, we obtain the following SDP programs for (i) checking whether a given qutrit POVM $\M$ is PM-simulable; and (ii) computation of the critical visibility $t(\M)$ for a qutrit POVM. 

\begin{lem}[SDP programme for PM-simulability for qutrits]\label{lem:qutritcharSDP}
	Let $\M\in\P(3,n)$ be a qutrit $n$-outcome measurement on $d$-dimensional Hilbert space. The question whether $\M$ is PM-simulable, $\M\in\SP(3,n)$, can be solved by the following SDP feasibility problem 
	\begin{align}
	& \text{Given} && \M = (M_1,\ldots,M_n)\nonumber \\
	& \text{Check if there exist} && \left\{\N_X\right\}_{X\in[n]_3}\ ,\ \left\{p_X\right\}_{X\in[n]_2}\ , \left\{\N_Y\right\}_{Y\in[n]_3}\ ,\ \left\{p_Y\right\}_{X\in[n]_3}\ , \nonumber \\
	& \text{Such that} && \M=\sum_{X\in[n]_2} \N_X + \sum_{Y\in[n]_3} \N_Y\ , \nonumber \\
	& && [\N_X]_i \geq 0\ ,\ [\N_Y]_i \geq 0\,\ \ i=1,\ldots,n \ , \nonumber \label{eq:convQUTRITout}\\
	& && [\N_X]_i = 0\ \ \text{for}\ i\notin{X} \ ,\ [\N_Y]_i = 0\ \ \text{for}\ i\notin{Y}\ , \ \nonumber \\
	& && \tr\left([\N_Y]_i\right) = p_Y\ \ \text{for}\ i\in{Y} \ , \\
	& && \sum_{i=1}^n [\N_X]_i= p_X \I\ , \sum_{i=1}^n [\N_Y]_i= p_Y \I\ \nonumber \\
	& && p_X \geq 0\ ,\ p_Y\geq 0\ ,\ \sum_{X\in[n]_2} p_X+\sum_{Y\in[n]_3} p_Y=1\ . \nonumber 
	\end{align}
\end{lem}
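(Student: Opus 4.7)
The plan is to reduce the statement to Lemma~\ref{lem:mainresult2aux}, which already gives the structural characterisation $\SP(3,n)=\left(\P(3,n;2)\cup \P_1(3,n;3)\right)^{\mathrm{conv}}$ together with an explicit decomposition: $\M$ is PM-simulable if and only if there exist probabilities $\{p_X\}_{X\in[n]_2}\cup\{p_Y\}_{Y\in[n]_3}$, a two-outcome POVM $\N_X\in\P(3,n;2)$ supported on each $X$, and a trace-$1$ three-outcome POVM $\N_Y\in\P_1(3,n;3)$ supported on each $Y$, such that $\M=\sum_X p_X\N_X+\sum_Y p_Y\N_Y$. All that remains is to cast these feasibility conditions as semi-definite constraints.

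The only non-linearity in the above characterisation is the bilinear product $p_X\N_X$ (and analogously $p_Y\N_Y$). Exactly as in the proof of Lemma~\ref{lem:mOUTcharSDP}, I would absorb the scalar into the operator by introducing rescaled variables $\tilde{\N}_X\defeq p_X\N_X$ and $\tilde{\N}_Y\defeq p_Y\N_Y$, treating these vectors of Hermitian matrices as the SDP decision variables (together with the scalars $p_X,p_Y\ge 0$). Under this substitution, the decomposition equation becomes the linear constraint $\M=\sum_{X\in[n]_2}\tilde{\N}_X+\sum_{Y\in[n]_3}\tilde{\N}_Y$, and the requirement $\N_X\in\P(3,n;2)$ translates into $[\tilde{\N}_X]_i\ge 0$, $[\tilde{\N}_X]_i=0$ for $i\notin X$, and $\sum_i[\tilde{\N}_X]_i=p_X\I$; these are precisely semi-definite and linear conditions in the SDP variables.

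The remaining subtlety is the trace-$1$ condition on the effects of each $\N_Y\in\P_1(3,n;3)$. After rescaling, $[\tilde{\N}_Y]_i = p_Y[\N_Y]_i$, so the condition $\tr([\N_Y]_i)=1$ (for $i\in Y$) becomes $\tr([\tilde{\N}_Y]_i)=p_Y$, which is linear in the decision variables. Combined with $[\tilde{\N}_Y]_i\ge 0$, $[\tilde{\N}_Y]_i=0$ for $i\notin Y$, $\sum_i[\tilde{\N}_Y]_i=p_Y\I$, and the normalisation $\sum_X p_X+\sum_Y p_Y=1$, we obtain exactly the programme stated in the lemma.

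Conversely, any feasible point of the SDP yields, upon setting $\N_X\defeq\tilde{\N}_X/p_X$ and $\N_Y\defeq\tilde{\N}_Y/p_Y$ (ignoring terms with $p=0$, which contribute nothing to the decomposition), the required convex combination of elements of $\P(3,n;2)\cup \P_1(3,n;3)$, and hence a PM-simulation of $\M$ by Lemma~\ref{lem:mainresult2TEH}. Since all constraints are either linear equalities, linear inequalities, or positive-semidefiniteness conditions on Hermitian matrices, the feasibility of this programme is indeed an SDP. The main conceptual work has already been carried out in Lemmas~\ref{lem:mainresult2TEH} and~\ref{lem:mainresult2aux}; the only obstacle here is the bookkeeping of the rescaling trick for the two distinct families of summands, and in particular correctly handling the trace-$1$ normalisation after rescaling.
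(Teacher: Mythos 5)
Your proposal is correct and follows essentially the same route as the paper: the paper derives this SDP by combining the structural characterisation of $\SP(3,n)$ (Lemma~\ref{lem:mainresult2TEH} and its reformulation in Lemma~\ref{lem:mainresult2aux}) with the same rescaling trick $\tilde{\N}_X\defeq p_X\N_X$, $\tilde{\N}_Y\defeq p_Y\N_Y$ used in the proof of Lemma~\ref{lem:mOUTcharSDP}, including the observation that the trace-one condition becomes the linear constraint $\tr([\tilde{\N}_Y]_i)=p_Y$. Your handling of the $p=0$ terms in the converse direction is a point the paper leaves implicit, but the argument is the same.
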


\begin{lem}[SDP for critical visibility for PM-simulability for qutrits simulability]\label{lem:QUTRITvisSDP}
	Let $\M\in\P(d,n)$ be $n$-outcome measurement on $d$-dimensional Hilbert space. Let $m\leq n$. The critical visibility $t(\M)$ can be computed via the following SPD programme
	\begin{align}
	& \text{Given} && \M = \left(M_1,\ldots,M_n\right)\nonumber \\
	& \text{Maximise} && t\ \nonumber \\
	& \text{Such that} && t\M+(1-t)\left(\tr(M_1)\frac{\I}{d},\ldots,\tr(M_n)\frac{\I}{d}\right)=\sum_{X\in[n]_2} \N_X + \sum_{Y\in[n]_3} \N_Y\ , \nonumber \\
	& \text{Where} && \left\{\N_X\right\}_{X\in[n]_3}\ ,\ \left\{p_X\right\}_{X\in[n]_2}\ , \left\{\N_Y\right\}_{Y\in[n]_3}\ ,\ \left\{p_Y\right\}_{X\in[n]_3}\ , \nonumber \\
	& && \M=\sum_{X\in[n]_2} \N_X + \sum_{Y\in[n]_3} \N_Y\ , \nonumber \\
	& && [\N_X]_i \geq 0\ ,\ [\N_Y]_i \geq 0\,\ \ i=1,\ldots,n \ , \nonumber \label{eq:convQUTRITout}\\
	& && [\N_X]_i = 0\ \ \text{for}\ i\notin{X} \ ,\ [\N_Y]_i = 0\ \ \text{for}\ i\notin{Y}\ , \ \nonumber \\
	& && \tr\left([\N_Y]_i\right) = p_Y\ \ \text{for}\ i\in{Y} \ , \\
	& && \sum_{i=1}^n [\N_X]_i= p_X \I\ , \sum_{i=1}^n [\N_Y]_i= p_Y \I\ \nonumber \\
	& && p_X \geq 0\ ,\ p_Y\geq 0\ ,\ \sum_{X\in[n]_2} p_X+\sum_{Y\in[n]_3} p_Y=1\ , \nonumber \\
	& && 1\geq t \geq 0\ . \nonumber 
	\end{align}
\end{lem}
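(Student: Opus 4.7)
The plan is to derive this SDP as a ``parametric'' version of the feasibility SDP in Lemma~\ref{lem:qutritcharSDP}, promoting the visibility $t$ from an external parameter to an optimisation variable. First I would recall that by definition
\begin{equation}
t(\M)=\max\SET{t\in[0,1]}{\Phi_t(\M)\in\SP(3,n)}\ ,
\end{equation}
and that the $i$-th effect of the depolarised POVM is the affine combination $\left[\Phi_t(\M)\right]_i = tM_i+(1-t)\tr(M_i)\I/3$, linear in $t$.

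Next I would invoke Lemma~\ref{lem:mainresult2aux}, which characterises membership in $\SP(3,n)=(\P(3,n;2)\cup\P_1(3,n;3))^{\mathrm{conv}}$ through the existence of rescaled effects $\{\N_X\}_{X\in[n]_2}$, $\{\N_Y\}_{Y\in[n]_3}$ and non-negative scalars $\{p_X\}$, $\{p_Y\}$ satisfying the positivity, support, trace-normalisation and probability constraints appearing in the statement, together with the decomposition equation for the target POVM. Substituting $\M\mapsto\Phi_t(\M)$ into that decomposition turns the only constraint involving $\M$ into a condition that is jointly affine in $t$ and in the matrix variables, while all remaining constraints are $t$-independent linear-matrix-inequalities. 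Maximising the linear objective $t$ over the resulting spectrahedron is then exactly the SDP stated in the lemma.

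To conclude, I would verify that the maximum is actually attained and equals $t(\M)$. The feasibility set is closed and bounded (probabilities are confined to $[0,1]$ and, together with the sum constraints, bound the norms of all rescaled effects), hence compact; it is non-empty since $t=0$ is always feasible, as $\Phi_0(\M)$ is a convex combination of the deterministic projective measurements $(\I,0,\ldots,0),\ldots,(0,\ldots,0,\I)$, each of which fits the $\P(3,n;2)$-template as a two-outcome POVM with one vanishing effect. No genuine obstacle arises: the mathematical content lies in Lemmas~\ref{lem:mainresult2TEH} and~\ref{lem:mainresult2aux}, and what remains is a bookkeeping check that rewriting the decomposition as an affine condition in $(t,\N_X,\N_Y)$ preserves the SDP form.
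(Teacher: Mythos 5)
Your proposal is correct and follows essentially the same route as the paper, which derives this SDP by combining the characterisation of $\SP(3,n)$ from Lemma~\ref{lem:mainresult2aux} with the rescaling $\tilde{\N}_X=p_X\N_X$ and then promoting $t$ to an optimisation variable; in fact the paper gives no explicit proof here beyond asserting that it follows by "analogous methods" from the $m$-outcome case, so your added checks (joint affinity in $(t,\N_X,\N_Y)$, compactness of the feasible set, and feasibility at $t=0$ via the trivial POVM) only make the argument more complete. Note only that the printed SDP contains a stray second constraint $\M=\sum_X\N_X+\sum_Y\N_Y$ alongside the depolarised one, which is evidently a typographical remnant and correctly absent from your derivation.
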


\begin{rem}
	A qutrit POVM $\M\in\P(3,n)$ is projective-simulable if and only if $t(\M)=1$.
\end{rem}

\section{Details of the numerical computations} \label{app:Numerics}

\subsection*{External polytopes approximating $\P(2,4)$}

In what follows, we explicitly describe the construction of polytopes $\Delta\in\Herm(\C^2)^{\times 4}$ approximating $\P(4,2)$ \emph{from outside}. For $d=2$ the Bloch sphere representation facilitates the visualisation of the general scheme presented in the main text. Recall that operators $\{\I, \sigma_x, \sigma_y, \sigma_z\}$ form is a basis for the real space of Hermitian matrices $\Herm(\mathbb{C}^2)$. 
Therefore, any $M\in Herm(\mathbb{C}^2)$ can be written as
\begin{equation}\label{eq:papamQUB}
M = \alpha \I + x \sigma_x + y \sigma_y + z \sigma_z,
\end{equation}
where $\alpha, x, y, z \in \mathbb{R}$. Diagonalising $M$ we see that $M \geq 0$ if and only if
\begin{equation}
\alpha\geq0 \ \text{and}\ \sqrt{x^2+y^2+z^2}\leq \alpha\ . 
\end{equation}
The idea now is to relax the positive semi-definiteness condition $\M\geq0$ by requiring that $\tr(M\kb{\psi_j}{\psi_j}\geq0$ for some collection of pure states $\{\kb{\psi_j}{\psi_j}\}_{j=1}^N$. Every pure state has a representation, $\kb{\psi_j}{\psi_j}=(1/2)(\I-\vec{v}_j \cdot \vec{\sigma})$, where $\vec{v}_j$ is a vector from a unit sphere in $\R^3$. In the language of the parametrisation \eqref{eq:papamQUB} the conditions $\tr(M\kb{\psi_j}{\psi_j})\geq0$ are equivalent to
\begin{equation}\label{eq:facets}
(x,y,z)\cdot v_j \leq \alpha,\ i=1,...,N \ ,
\end{equation}
where ``$\cdot$'' denotes the standard inner product in $\R^3$. 

We define a polytope of quasi-POVMs $\Delta$ as a subset of $\Herm(\C^2)^{\times 4}$ consisting of vectors of Hermitian operators $\M=(M_1,\ldots,M_4)$ such that each ``effect'' satisfies inequalities \eqref{eq:facets}. Since $M\geq0$ implies \eqref{eq:facets} we conclude that $\P(2,4)\subset\Delta$.

We start writing a quasi-POVM as a 16-entry vector
\begin{equation}
\M \equiv (\alpha_1, x_1, y_1, z_1,..., \alpha_4, x_4, y_4, z_4)\in\R^{16},
\end{equation}
each four entries representing one quasi-effect. The initial description of $\Delta$ is given by $4 N$ inequalities defined by \eqref{eq:facets} applied plus the 8 ``global'' constraints
\begin{eqnarray}
&\alpha_i \geq 0, \ i=1,...,4\\
&\sum_i{\alpha_i} = 1\\
&\sum_i{x_i} = \sum_i{y_i} = \sum_i{z_i} = 0.
\end{eqnarray}

We now list a series of simplifications, with the goal of optimising the computation of lower bounds for $t(2)$.
\begin{itemize}
	\item The normalisation of the quasi-POVMs allows us to write $M_4 = \I- M_1 - M_2 -M_3$, and thus we can drop the 4 parameters describing $M_4$;
	\item Since $\P(2,4)$ and $\SP(2,4)$ are invariant under unitary conjugations, $\M\mapsto U.\M\defeq(U M_1 U^\dagger,\ldots,U M_4 U^\dagger)$, $t(\M)=t(U.\M)$. Hence we can restrict our construction to POVMs where (i) the Bloch vector of the first effect points in the direction of the $x$ axis; and (ii) the Bloch vector of the second effect lies in the $xy$ plane; 
	\item By doing this, we describe $M_1$ with only two parameters $\alpha_1, x_1$, and this description is exact: we can take $M_1$ to be a genuine effect (as opposed to a quasi-effect). Since extremal POVMs with $d^2$ outcomes have rank-one effects~\cite{DAriano2005}, we can set $\alpha_1 = x_1$ and dismiss one more parameter. With $M_2$ being described by three parameters, we pass from a 16-dimensional vector to a 8-dimensional vector representation of $\M$;
	\item Since $M_2$ lies in the plane, we consider a polygon $\hat{\Delta}$ approximating the circle \emph{from outside}, rather than a polyhedron approximation of the sphere.  Furthermore, we can always assume that this vector lies in the $y$-positive semi-plane, and take a polygon approximating the semi-circle. In what follows, we refer to the extremal points of of this polygon by $\{\vec{w}_i\}_{i=1}^{N'}$
	\item Since we have a candidate for the most robust POVM, $\M^{\mathrm{tetra}}$, we can adapt the construction of $\Delta$ to be tangent to a rotated $\M^{\mathrm{tetra}}$ to provide a better approximation around that point. We thus add to $\{\vec{v}\}_{j=1}^N$ the vector $(\cos\frac{2\pi}3,\sin\frac{2\pi}3,0)$, which corresponds to the second effect of $M^{\mathrm{tetra}}$, and add to vectors $\{\vec{v}_j\}_{j=1}^N$ the vertices $(-\frac12, -\frac{\sqrt3}{4}, \pm \frac34)$ corresponding to the third and fourth effects of $M^{\mathrm{tetra}}$.
\end{itemize}

Recall that our strategy to lower bound $t(2)$ was to construct a sequence of polytopes $\Delta\subset(\Herm(\mathbb{C}^2))^{\times 4}$ containing the set of four-outcome qubit POVMs $\P(2,4)$. Then we check using the SDP given in Lemma~\ref{lem:mOUTvisSDP} the minimum amount of depolarisation needed for the simulation of each extremal point of $\Delta$, and find the most robust one among the finitely many of them. The value $t_\Delta$ yielded this way is a lower bound for the minimum amount of depolarisation $t(2)$ for which any qubit POVM becomes simulable by projective measurements.

Setting $\{\vec{w}_i\}_{i=1}^{N'}$ to be the vertices of ``regular half-polygon'' of 100 sides and the remaining $\vec{v}_i$ to be the vertices of the Archimedean solid called truncated icosahedron (together with its dual), we obtain a polytope $\Delta$ with $\approx 850,000$ extremal points and $t_\Delta = 0.8152$, quite close to $t(\M^{\mathrm{tetra}})= \sqrt{2/3}\approx 0.8165$.

\subsection*{Potytopes approximating covariant measurements for qutrits}

Following the same reasoning applied to approximate $\P(2,4)$, we can approximate $\P(d,n)$ from the outside for any dimension $d$ and number of outcomes $n$.
One drawback is that already for dimension $d=3$ this becomes computationally expensive.
However, if we aim at the set $\P_{\mathrm{cov}}(3,9)$ of qutrit covariant POVMs regarding the discrete Heisenberg group $\mathbb{Z}^3 \times \mathbb{Z}^3$ \cite{Renes2003}  the task becomes easier, since for this class of POVMs from one effect we are able to derive the others.
Indeed, starting from a positive semi-definite seed $M\in\Herm(\C^3)$, having trace 1/3, we obtain the effects $M_1=M, M_2,...,M_9$ of a covariant POVM by conjugating $M$ by the unitaries
\begin{equation}
D_{jk} = \omega^{jk/2}\sum_{m=0}^2{\omega^{jk}\ket{k\oplus m}\bra{m}}\ ,
\end{equation}
where $\omega = \exp (2\pi i/d)$, $j,k=0,1,2$ and sum is modulus 3. We conjecture that among POVMs constructed in this way one  can find POVMs that are most robust regarding projective simulability. 

In order to construct a polytope of covariant quasi-POVMs, we  relax the positivity of the seed by ensuring $tr(M\ket{\phi_i}\bra{\phi_i})\geq 0$, for some finite set $\{\ket{\phi_i}\}_{i=1}^N$, and construct the corresponding quasi-effects.  
Since we need only $d^2-1=8$ real parameters to describe one Hermitian operator, this procedure in numerically trackable.

In dimension $d=3$ we do not have a particular candidate for the most robust POVM, as in the case of $\M^{\mathrm{tetra}}$ for qubits.
An heuristic search revealed that there are covariant POVMs with robustness $t(\M) \approx 0.7934$, leading to the bound
\begin{equation}
 t_{\mathrm{cov}}(3)\leq 0.7934\ ,
\end{equation} 
where $t_{\mathrm{cov}}(d)$ denotes the robustness of the most robust covariant POVM in $\P_{\mathrm{cov}}(d, d^2)$. Based on the approximation given by $N=180$ pure states chosen uniformly at random, we constructed a polytope of covariant quasi-POVMs with nearly 316.000 extremal points.
Optimizing over the critical visibilities over the extremal points we are able to find the lower bound to the critical vilibility
\begin{equation}
t_{\mathrm{cov}}(3)\geq 0.5559.
\end{equation} 

As mentioned in the main text, we do not have a characterization of $\P(d,d^2)$ that allows to decide whether a given POVM is projective-simulable or not for $d>3$. Nevertheless, we presented an SDP to check the simulation of a POVM in terms of $d$-outcome POVMs. It is worthy to mention that for every qutrit covariant POVM, the critical depolarization needed to make it projective-simulable matches the critical depolarization needed to make it 3-outcome-simulable.
We thus conjecture that for any covariant POVM, the best $d$-outcome simulators are already PMs.
If that is indeed the case, the strategy of constructing polytopes of covariant quasi-POVMs and optimizing over extremal points could be used to find arbitrarily sharp bounds on $t_{\mathrm{cov}}(d)$, for any dimension $d$, provided enough computational power.
Moreover, if the most robust POVMs are indeed covariant, then $t_{\mathrm{cov}}(d)=t(d)$ and this approach is general enough to convert a projective-LHV model into a LHV model for general POVMs in any dimension.

\subsection*{Projective decomposition of trace-one, three-outcome qutrit POVMs}

Here we present a constructive strategy to find a projective decomposition for trace-one qutrit measurements $\M\in\P_1(3,3)$. Our strategy is based on the proof of Lemma~\ref{lem:mainresult2TEH}.
To decrease the rank of each effect, we look for perturbations $\X=(X_1, X_2, X_3)\in\Herm(\H)^{\times 3}$ such that $\M\pm\X$ are still valid trace-one POVMs.
Trace preservation is equivalent to $\tr(X_i) = 0$ and the normalisation of POVMs requires that $\sum_i X_i = 0$. Recall that forf a POVM $\M$ we defined $r(\M)=(r_1,r_2,r_3)$, where $r_i=\mathrm{rank}(M_i)$ (permutations in the order of ranks correspond to relabellings of the original POVM).

We start assuming that the effects of $\M$ are full-rank, i.e., $r(\M)=(3,3,3).$ 
Let $\lbrace\ket{\psi_i}\rbrace_{i=1}^3$ be an arbitrary basis of $\C^3$ we define 
\begin{equation}\label{eq:pertX}
\PPP = \left(\kb{\psi_1}{\psi_1}, \kb{\psi_2}{\psi_2}, \kb{\psi_3}{\psi_3}\right) ,
\end{equation}
We now calculate 
\begin{equation}
t_\ast = \max \SET{t}{M_i - t \kb{\psi_i}{\psi_i} \geq 0} 
\end{equation}
noticing that this can be done via an SDP. This leads to a decomposition
\begin{equation}\label{bagoly}
\M = t_\ast \PPP  +(1-t_\ast)\tilde{\M} \ ,
\end{equation}
where $\tilde{\M}\defeq \frac{1}{1-t_\ast}  (\M -t_\ast\PPP)$ is a POVM belonging to $\P_1 \left(3,3\right)$. It is crucial that $\tilde{\M}$ has the property that ranks of its effects are smaller equal then the corresponding ranks of $\M$. Moreover, \emph{at least one effect} has rank smaller then the rank of the corresponding rank of the effect of $\M$. Having now the POVM $\tilde{M}$, we continue further decomposing, and, slightly abusing the notation, we keep referring to it as $\M$.

Let us now consider the cases when $r(\M)=(3,3,2)$ or $r(\M)=(3,2,2)$. The analogous reasoning as the one presented above for 
\begin{equation}
\tilde{\PPP}= \left(\kb{\phi_1}{\phi_1}, \kb{\phi_2}{\phi_2}, \kb{\phi_3}{\phi_3}\right)\ ,
\end{equation} 
where vectors $\ket{\phi_i}$ are chosen in such a way that $\ket{\phi_3}\in\mathrm{supp}(M_3)\cap\mathrm{supp}(M_3)$, $\ket{\phi_2}\in \mathrm{supp}(M_2)$ and $\ket{\phi_2}\perp\ket{\phi_3}$,  and $\ket{\phi_1}\perp\ket{\phi_2}$,  $\ket{\phi_1}\perp\ket{\phi_3}$.   In the course of this "rank reduction" process  \eqref{eq:pertX} provides a decomposition of $\M$ into a projective POVM and a POVM $\tilde{\M}\in \P_1 (3,3)$  has at least one effect with the rank smaller than $\M$.

 Notice that in the course of this ``rank reduction'' process, the case $r\M) = (2,1,3)$ will never occur because the second effect would be a projector and therefore orthogonal to the two others (remember that the trace of every effect of every POVM involved is one).

In the case $r(\M) = (2,2,2)$ we have to resort to the general method of perturbations. In the support of each of the three effects is spanned by a pair of eigenvectors. To find a traceless perturbation supported in the corresponding supports, we construct copies of the Pauli operators in each support. Since the space of traceless Hermitian operators acting on a space of dimension 3 has dimension 8, these 9 traceless operators cannot be all linearly independent. Exploring this fact, we construct $\X$ as it follows. Let $\ket{\psi_1^i}, \ket{\psi_2^i}$ be the eigenvectors associated to the non-null eigenvalues of $M_i,\ i=1,2,3$.
Consider now
\begin{eqnarray}
\sigma_x^i &=& \kb{\psi_1^i}{\psi_2^i} + \kb{\psi_2^i}{\psi_1^i}\ ,\\
\sigma_y^i &=& i(\kb{\psi_1^i}{\psi_2^i} - \kb{\psi_2^i}{\psi_1^i})\ ,\\ \sigma_z^i &=& \kb{\psi_1^i}{\psi_1^i} - \kb{\psi_2^i}{\psi_2^i}\ .
\end{eqnarray}
As mentioned above, without loss of generality we can write
\begin{equation}
\sigma_x^1 = \sum_{a,i}q_{a,i}\sigma_a^i\, 
\end{equation}
for some real coefficients $q_{a,i}$, with $q_{x,1} = 0$.
Thus the operators defined by
\begin{eqnarray}
X_1 &=& \sigma_x^1 - q_{y,1}\sigma_y^1 - q_{z,1}\sigma_z^1\\
X_i &=& -\sum_{a}q_{a,i}\sigma_a^i,\ i=2,3
\end{eqnarray}
are traceless, supported in the support of the corresponding $M_i$ and sum to zero. Having constructed the perturbation $\X$ We now calculate 
\begin{eqnarray}
t_+ = \max\{ t ; \M + t\X \in \P(3,3) \}\ , \\
t_- = \max\{ t ; \M - t\X \in \P(3,3) \}\ ,
\end{eqnarray}
noticing that this can be done via an SDP.
This leads to a decomposition
\begin{equation}\label{bagoly}
\M = \frac{t_-}{t_+ + t_-}\M_1 (\M + t_+\X) + \frac{t_+}{t_+ + t_-}(\M - t_-\X)\ ,
\end{equation}
where $\M_1=(\M + t_+\X)$ and $M_2=(\M - t_-\X)$ are POVMs belonging to $\P_1 (3,3)$. Again, $\M_1$,$\M_2$ both have the property that rank of their effects are smaller equal then the corresponding ranks of $\M$. Moreover, \emph{at least one their effects} has rank smaller then the rank of the corresponding rank of the effect of $\M$.

Finally, the $r(\M) = (1,2,2)$ is a trivial case, since it is equivalent to a two-outcome POVM in the subspace of $\H$ orthogonal to the support of the first effect.
The case $r(\M) = (1,1,2)$ will not occur (see reasoning justifying that $r(\M)=(3,2,1)$ would not occur). Finally, due to the conditions $\tr(M_i)=1$ the condition $r(\M)=(1,1,1)$ is equivalent to $\M$ being a projective measurement. Therefore, in the process of iterative of ``rank reduction'' described above we finally get a convex decomposition of the intitial POVM $\M$ into projective POVMs.

\end{document}